\newtheorem{theorem}{Theorem}
\newtheorem{definition}[theorem]{Definition}
\newtheorem{lemma}[theorem]{Lemma}
\newtheorem{proposition}[theorem]{Proposition}
\newenvironment{proof}[1][Proof]{\noindent\textbf{#1.} }{\ \rule{0.5em}{0.5em}}
\begin{document}

\title{The Series Product for Gaussian Quantum Input Processes }
\author{John E. Gough \\
Aberystwyth University, SY23 3BZ, Wales, United Kingdom \\
e-mail: jug@aber.ac.uk \\
[2ex] Matthew R. James \\
Australian National University, Canberra, ACT 0200, Australia \\
e-mail: Matthew.James@anu.edu.au}
\maketitle

\begin{abstract}
We present a theory for connecting quantum Markov components into a network
with quantum input processes in a Gaussian state (including thermal and
squeezed). One would expect on physical grounds that the connection rules
should be independent of the state of the input to the network. To compute
statistical properties, we use a version of Wicks' Theorem involving
fictitious vacuum fields (Fock space based representation of the fields) and
while this aids computation, and gives a rigorous formulation, the various
representations need not be unitarily equivalent. In particular, a naive
application of the connection rules would lead to the wrong answer. We
establish the correct interconnection rules, and show that while the quantum
stochastic differential equations of motion display explicitly the
covariances (thermal and squeezing parameters) of the Gaussian input fields
We introduce the Wick-Stratonovich form which leads to a way of writing
these equations that does not depend on these covariances and so corresponds
to the universal equations written in terms of formal quantum input
processes. We show that a wholly consistent theory of quantum open systems
in series can be developed in this way, and as required physically, is
universal and in particular representation-free.
\end{abstract}

\noindent \textbf{Keywords:} Gaussian Wick Theorem, Wick-Stratonovich Form,
Quantum Gaussian Feedback Networks.

\section{Introduction}

The quantum input-output theory has had an immense impact on quantum optics,
and in recent years has extended to opto-mechanical systems and beyond. The
prospect of routing the inputs through a network, or indeed using feedback
has lead to a burgeoning field of quantum feedback control \cite{I_12}-\cite
{JAM_14}. The development of a systems engineering approach to quantum
technology has benefited from having a systematic framework in which
traditional open quantum systems models can be combined according to
physical connection architectures.

The initial work on how to cascade two quantum input-output systems can be
traced back to Gardiner \cite{Gardiner_cascade} and Carmichael \cite
{Carmichael_cascade}. More generally, the authors have introduced the theory
of \textit{Quantum Feedback Networks} (QFN) which generalizes this to
include cascading, feedback, beam-splitting and general scattering of
inputs, etc., \cite{GJ-QFN}, \cite{GJ-Series}. One of the basic constructs
is the series product which gives the instantaneous feedforward limit of two
components connected in series via quantum input processes: in fact, the
systems need not necessarily be distinct and the series product generalizes
cascading by allowing for feedback. The original work was done for input
processes where the input fields where in the Fock vacuum field state. A
generalization to squeezed fields and squeezing components has been given 
\cite{GJN_squeeze}, however this was restricted to the case of linear
coupling and dynamics: there it was shown that the resulting transform
analysis could be applied in a completely consistent manner. More recent
work has shown that non-classical states for the input fields, such as
shaped single-photon or multi-photon states, or cat states of coherent
fields, may in principle be generated from signal models \cite{GJNC}, \cite
{GZ_filter} - that is, where a field in the Fock vacuum state was passed
through an ancillary dynamical system (the signal generator) which is then
to be cascaded to the desired system. Quantum feedback network (QFN) theory
concerns the interconnection of open quantum systems. The interconnections
are mediated by quantum fields in the input-output theory, \cite
{GarZol00,GJ-QFN,GJ-Series}. The idea is that an output from one node is fed
back in as input to another (not necessarily distinct) node, the simplest
case being the cascade connection (e.g., light exiting one cavity being
directed into another). The components are specified by Markovian models
determined by SLH parameters which describe the self-energy of the system
and how the system interacts with the fields (via idealized Jaynes-Cummings
type interactions and scattering).

Here we turn to the problem of the general class of Gaussian states for
quantum fields. This includes thermal fields, and of course squeezed fields.
In principle, these may be approximated as the output of a degenerate
parametric amplifier (DPA) driven by vacuum input, see \cite{GarZol00}. In a
sense, we have that a singular DPA may serve is the appropriate signal
generator to modify a vacuum field into a squeezed field before passing into
a given network. We will exploit this in the paper, however, we will have to
pay attention to the operator ordering problem when inserting these
approximations into quantum dynamical equations of motion and input-output
relations.

The programme turns out to be rather more involved than one might expect at
first glance. It is always possible to represent a collection of $d$
Gaussian fields using $2d$ vacuum fields (a Bogoliubov transformation!) and
one might hope that the corresponding connection rules applied to the
representation in terms of vacuum fields would agree with the intuitive
rules one would desire. This turns out not to be the case, and the various
feedback constraints cannot be naively applied to the representing fields:
the reason is that the representations are a linear combination of creation
and annihilation operators for the representing vacuum fields, and we have
broken the Wick ordered form of the original equations.

If applied naively, the series product would predict a contribution to the
global network model that depended on the covariance parameters of the
state. From the physical point of view, this ought to be spurious. In
comparison with classical analog linear electronics, we see that the
components (e.g. resistors, capacitors, inductors) are described by
impedances. When components are interconnected to form a network, the
network may be described by an equivalent impedance, derived through an
application of Kirchhoff laws. Impedances do not depend on the applied
currents or voltages, and are therefore intrinsic to the device or network.
Similarly the rules for connecting a quantum feedback network should be
intrinsic, and not depend on the state of the noise fields.

\section{Background and Problem statement}

Let us begin in the concrete setting of the quantum stochastic calculus of
Hudson and Parthasarathy \cite{HP} with a fixed initial space $\mathfrak{h}%
_{0}$ and a noise space that is the (Bose) Fock space over $\mathbb{C}^{d}$%
-valued $L^{2}$-functions on the time interval $[0,\infty )$. In the
language of Hudson and Parthasarathy, we have a multiplicity space of
dimension $d$ and we select an orthonormal basis which determines $d$
channels. We denote by $A_{k}\left( t\right) $, $A_{k}\left( t\right) ^{\ast
}$, and $\Lambda _{jk}\left( t\right) $ the processes of annihilation,
creation (for channel $j$) and scattering (from channel $k$ to channel $j$).
In the following, we shall introduce an Einstein summation convention for
repeated channel indices. We will deal with the class of quantum stochastic
integrals processes satisfying the appropriate conditions of local
integrability, square-integrability \cite{HP} without explicit reference.
We have for instance the QSDE 
\begin{equation}
dX\left( t\right) =x_{jk}\left( t\right) d\Lambda _{jk}\left( t\right)
+x_{j0}\left( t\right) dA_{j}\left( t\right) ^{\ast }+x_{0k}dA_{k}\left(
t\right) +x_{00}\left( t\right) dt
\end{equation}
where the coefficients are adapted and the increments are (quantum) It\={o}.
We have the quantum It\={o} product formula 
\begin{equation}
d\left( X\left( t\right) Y\left( t\right) \right) =dX\left( t\right)
\,Y\left( t\right) +X\left( t\right) \,dY\left( t\right) +dX\left( t\right)
\,dY\left( t\right)
\end{equation}
where the It\={o} correction comes from the quantum It\={o} table \cite{HP} 
\begin{eqnarray}
d\Lambda _{jk}\left( t\right) \,d\Lambda _{lm}\left( t\right) &=&\delta
_{kl}d\Lambda _{jm}\left( t\right) ,\quad d\Lambda _{jk}\left( t\right)
\,dA_{l}\left( t\right) ^{\ast }=\delta _{kl}dA_{j}\left( t\right) ^{\ast }, \nonumber
\\
dA_{k}\left( t\right) \,d\Lambda _{lm}\left( t\right) &=&\delta
_{kl}dA_{m}\left( t\right) ,\quad dA_{j}\left( t\right) \,dA_{k}=\delta
_{jk}dt,
\end{eqnarray}
with all other products of the fundamental increments vanishing.

\begin{definition}{Definition}
The Stratonovich integral is defined algebraically via 
\begin{eqnarray}
X\left( t\right) \circ dY\left( t\right) &=&X\left( t\right) dY\left(
t\right) +\frac{1}{2}dX\left( t\right) \,dY\left( t\right) \\
dX\left( t\right) \circ Y\left( t\right) &=&dX\left( t\right) \,Y\left(
t\right) +\frac{1}{2}dX\left( t\right) \,dY\left( t\right) .
\end{eqnarray}
\end{definition}

This turns out to be equivalent to a mid-point rule 
\cite{Chebotarev}.

If we consider the QSDE $dU\left( t\right) =-idE\left( t\right) \circ
U\left( t\right) $, with $U\left( 0\right) $ the identity and $E\left(
t\right) =E_{jk}\Lambda _{jk}\left( t\right) +E_{j0}B_{j}\left( t\right)
^{\ast }+E_{0k}B_{k}\left( t\right) +E_{00}$ a self-adjoint quantum
stochastic integral process, then we may convert to the It\={o} form to get 
\begin{equation}
dU\left( t\right) =\bigg\{\left( S_{jk}-\delta _{jk}\right) d\Lambda
_{jk}\left( t\right) +L_{j}dA_{j}^{\ast }\left( t\right) -L_{j}^{\ast
}S_{jk}dA_{k}\left( t\right) +Kdt\bigg\}\,U\left( t\right) ,
\end{equation}
where (setting $E_{\ell \ell }$ to be the $d\times d$ matrix with entries $%
E_{jk\text{)}}$%
\begin{equation}
S=\left[ 
\begin{array}{ccc}
S_{11} & \cdots & S_{1d} \\ 
\vdots & \ddots & \vdots \\ 
S_{d1} & \cdots & S_{dd}
\end{array}
\right] =\frac{I-\frac{i}{2}E_{\ell \ell }}{I+\frac{i}{2}E_{\ell \ell }}
\end{equation}
is called the matrix of scattering coefficients unitary (that is, $%
S_{jk}^{\ast }S_{jl}=\delta _{kl}=S_{lj}S_{kj}^{\ast }$), 
\begin{equation}
L=\left[ 
\begin{array}{c}
L_{1} \\ 
\vdots \\ 
L_{d}
\end{array}
\right] =\frac{i}{I+\frac{i}{2}E_{\ell \ell }}\left[ 
\begin{array}{c}
E_{10} \\ 
\vdots \\ 
E_{d0}
\end{array}
\right]
\end{equation}
which is the column vector of coupling operators, and 
\begin{equation}
K=-\frac{1}{2}L_{k}^{\ast }L_{k}-iH,
\label{eq:K_Fock}
\end{equation}
where $H$ is the Hamiltonian ($H^{\ast }=H=E_{00}+\frac{1}{2}E_{0j}\left[ 
\text{Im}\left\{ \frac{1}{I+\frac{i}{2}E_{\ell \ell }}\right\} \right]
_{jk}E_{k0}$). For simplicity we will assume that the terms $S_{jk},L_{j}$
and $H$ are bounded operators on the system Hilbert space $\mathfrak{h}_{0}$.

We generally refer to the triple $\mathbf{G}\sim \left( S,L,H\right) $ as
the Hudson-Parthasarathy parameters, or informally the ``SLH'' parameters
specifying the model. The unitary process they generate may be denoted as $%
U^{\mathbf{G}}\left( t\right) $ if we wish to emphasize the dependence on
these parameters.

For $X$ an operator of the initial space, we introduce $j_{t}\left( X\right)
=U\left( t\right) ^{\ast }X\,U\left( t\right) $ and from the quantum It\={o}
rule obtain the Heisenberg-Langevin equation 
\begin{equation}
dj_{t}(X)=j_{t}\left( \mathcal{L}_{jk}X\right) \,d\Lambda _{jk}+j_{t}(%
\mathcal{L}_{j0}X)\,dA_{j}^{\ast }+j_{t}(\mathcal{L}_{0k}X)\,dA_{k}+j_{t}(%
\mathcal{L}_{00}X)dt
\end{equation}
where 
\begin{equation}
\mathcal{L}_{jk}X=S_{lj}^{\ast }XS_{lk}-\delta _{jk}X,\quad \mathcal{L}%
_{j0}X=S_{lj}^{\ast }\left[ X,L_{l}\right] ,\quad \mathcal{L}_{0k}X=\left[
L_{l}^{\ast },X\right] S_{lk}
\end{equation}
and the Lindblad generator $\mathcal{L}_{00} \equiv \mathcal{L}$ is 
\begin{equation}
\mathcal{L} X=\frac{1}{2}L_{k}^{\ast }\left[ X,L_{k}\right] +\frac{1}{2}%
\left[ L_{k}^{\ast },X\right] L_{k}-i\left[ X,H\right] .
\label{eq:Linblad_Fock}
\end{equation}
The maps $\mathcal{L}_{\alpha \beta }$ are known as the \textit{Evans-Hudson
super-operators}. We shall occasionally write $j_{t}^{\mathbf{G}}\left(
X\right) $ for the dynamical flow of $X$ when we wish to emphasis the
dependence on the SLH parameters $\mathbf{G}$.

Let us now write the input processes as $A_{\mathrm{in},j}\left( t\right)
=A_{j}\left( t\right) $ and introduce the output processes as $A_{\mathrm{out%
},j}\left( t\right) =U\left( t\right) ^{\ast }A_{\mathrm{in},j}\left(
t\right) U\left( t\right) $ then from the quantum It\={o} rule we see that 
\begin{equation}
dA_{\mathrm{out},j}\left( t\right) =j_{t}\left( S_{jk}\right) \,dA_{\mathrm{%
in},k}\left( t\right) +j_{t}\left( L_{l}\right) \,dt.
\end{equation}

\subsection{Thermal Fields}

Considering the single channel $\left( d=1\right) $ case for the moment, we
may introduce non-Fock quantum stochastic processes as follows \cite{HL}. For $n>0$,
we set 
\begin{equation}
B\left( t\right) =\sqrt{n+1}A_{+}\left( t\right) +\sqrt{n}A_{-}\left(
t\right) ^{\ast },\quad \tilde{B}\left( t\right) =\sqrt{n}A_{+}\left(
t\right) +\sqrt{n+1}A_{-}\left( t\right) ^{\ast }
\end{equation}
which are canonical fields on the Fock space with a pair of channels
labeled as $k=\pm $. In fact, the map $\left( A_{+},A_{-}\right) \mapsto
\left( B,\tilde{B}\right) $ is a Bogoliubov transformation with inverse 
\begin{equation}
\left[ 
\begin{array}{c}
A_{+} \\ 
A_{-}
\end{array}
\right] =\left[ 
\begin{array}{cc}
\sqrt{\left( n+1\right) } & -\sqrt{n} \\ 
-\sqrt{n} & \sqrt{\left( n+1\right) }
\end{array}
\right] \left[ 
\begin{array}{c}
B \\ 
\tilde{B}
\end{array}
\right] .
\end{equation}

This is of course based on an Araki-Woods representation of the fiels \cite{AW63}. As is well known, these transformation cannot be implemented unitarily.
However, from a quantum optics point of view, devices transforming or even
squeezing fields in this manner are frequently considered, and it is useful
to imagine a hypothetical device - a Bogoliubov box - performing such a
canonical transformation on our idealized fields.

Ignoring the second process $\tilde{B}$, we obtain the non-Fock quantum
It\={o} table 
\begin{equation}
dB\left( t\right) dB\left( t\right) ^{\ast }=\left( n+1\right) dt,\quad
dB\left( t\right) ^{\ast }dB\left( t\right) =ndt.
\end{equation}
It problematic (read impossible) to incorporate a scattering process $%
\Lambda $ into this table. We refer to $B$ as non-Fock quantum noise.

We need to drop the scattering term from the unitary evolution equation,
i.e. set $S\equiv I$, and with 
\begin{equation}
L=\left[ 
\begin{array}{c}
L_{+} \\ 
L_{-}
\end{array}
\right] =\left[ 
\begin{array}{c}
\sqrt{n+1}L \\ 
-\sqrt{n}L^{\ast }
\end{array}
\right] 
\end{equation}
we have 
\begin{eqnarray}
dU\left( t\right)  &=&\bigg\{LdB\left( t\right) ^{\ast }-L^{\ast }dB\left(
t\right) +K^{\text{th}}dt\bigg\}\,U\left( t\right) \nonumber  \\
&=&\bigg\{L_{j}dA_{j}^{\ast }\left( t\right) -L_{j}^{\ast }dA_{j}\left(
t\right) +Kdt\bigg\}\,U\left( t\right) ,
\end{eqnarray}
where 
\begin{equation}
K^{\text{th}}=-\frac{1}{2}L_{+}^{\ast }L_{+}-\frac{1}{2}L_{-}^{\ast
}L_{-}-iH=-\frac{n+1}{2}L^{\ast }L-\frac{n}{2}LL^{\ast }-iH.
\end{equation}
For the flow, we need that the Hudson-Evans super-operator associated with
the scattering terms are trivial. This is the case when the entries of the
scattering matrix $S$ are (e.g. scalars) commuting with operators of the
initial space, but we can get away without assuming that $\left[ 
\begin{array}{cc}
S_{++} & S_{+-} \\ 
S_{-+} & S_{--}
\end{array}
\right] $ is the identity. By inspection we find that flow equation will
take the form 
\begin{equation}
dj_{t}\left( X\right) =j_{t}\left( \left[ X,L\right] \right) S^{\ast
}dB\left( t\right) ^{\ast }+j_{t}\left( \left[ L,X\right] \right) SdB\left(
t\right) +j_{t}\left( \mathcal{L}^{\text{th}}X\right) dt
\end{equation}
if and only if we take $\left[ 
\begin{array}{cc}
S_{++} & S_{+-} \\ 
S_{-+} & S_{--}
\end{array}
\right] \equiv \left[ 
\begin{array}{cc}
S & 0 \\ 
0 & S^{\ast }
\end{array}
\right] $ - otherwise we obtain the other noise $\tilde{B}$ - and in which
case the Lindbladian is 
\begin{eqnarray}
\mathcal{L}^{\text{th}}X &=&\frac{1}{2}\left[ L_{+}^{\ast },X\right] L_{+}+%
\frac{1}{2}L_{+}^{\ast }\left[ X,L_{+}\right] +\frac{1}{2}\left[ L_{-}^{\ast
},X\right] L_{-}+\frac{1}{2}L_{-}^{\ast }\left[ X,L_{-}\right] -i[X,H] \nonumber \\
&=&\frac{n+1}{2}\left\{ \left[ L^{\ast },X\right] L+L^{\ast }\left[ X,L%
\right] \right\} +\frac{n}{2}\left\{ \left[ L^{\ast },X\right] L+L^{\ast }%
\left[ X,L\right] \right\} -i[X,H].
\end{eqnarray}

\subsection{The Series Product - Vacuum Inputs}

In \cite{GJ-Series} the authors introduce a rule for combining SLH models in
series. For instance, we have the output
of the $\mathbf{G}_{\mathscr{A}}\sim \left( S_{\mathscr{A}},L_{\mathscr{A}%
},H_{\mathscr{A}}\right) $ fed instantaneously as input to $\mathbf{G}_{%
\mathscr{B}}\sim \left( S_{\mathscr{B}},L_{\mathscr{B}},H_{\mathscr{B}%
}\right) $ and it is shown that this is equivalent to the model generated by 
\begin{eqnarray}
\mathbf{G}_{\mathscr{B}}\vartriangleleft \mathbf{G}_{\mathscr{A}} &\sim
&\left( S_{\mathscr{A}},L_{\mathscr{A}},H_{\mathscr{A}}\right)
\vartriangleleft \left( S_{\mathscr{B}},L_{\mathscr{B}},H_{\mathscr{B}%
}\right)  \notag \\
&=&\bigg(S_{\mathscr{B}}S_{\mathscr{A}},L_{\mathscr{B}}+S_{\mathscr{B}}L_{%
\mathscr{A}},H_{\mathscr{A}}+H_{\mathscr{B}}+\mathrm{Im}\left\{ L_{%
\mathscr{B}}^{\ast }S_{\mathscr{B}}L_{\mathscr{A}}\right\} \bigg).
\label{eq:series_prod}
\end{eqnarray}
Here $\mathrm{Im}\left\{ C\right\} $ means $\frac{1}{2i}\left( C-C^{\ast
}\right) $.We note that every model may be written as a purely scattering
component and a non-scattering component in series, since we have the law $%
(S,L,H)=(I,L,H)\vartriangleleft (S,0,0)$.

We should remark that it is not necessary to view the two systems $%
\mathscr{A}$ and $\mathscr{B}$ as separate systems - specifically, in the
derivation of the series product\cite{GJ-Series} it is not assumed that the $%
\mathscr{A}$ and $\mathscr{B}$ operators need commute!

\subsection{Statement of the Problem}

If we wish to have a pair of systems $\mathscr{A}$ and $\mathscr{B}$ (both
accepting $d$ inputs) in series, then we obtain an equivalent Markov model
in the limit where the intervening connection is instantaneous. Let $L_{%
\mathscr{A}}$ be the column of the $d$ operators $L_{\mathscr{A},k}$, $%
k=1,\cdots , d$, and similar for system $\mathscr{B}$. The series product
says that the equivalent model has coupling $L_{\mathscr{A}}+L_{\mathscr{B}}$
and Hamiltonian 
\begin{eqnarray}
H_{\mathscr{A}}+H_{\mathscr{B}}+\mathrm{Im}\left\{ L_{%
\mathscr{B}}^{\ast }L_{\mathscr{A}}\right\} .
\end{eqnarray}

Suppose we were to apply the series product to two systems with the same
single thermal input $B$, and try and describe this as a series connection
using the two vacuum inputs $A_{+}$ and $A_{-}$. Naively applying the series
product to the construction in the $A_{\pm }$ format leads to the correct
rule $L_{\mathscr{A}}+L_{\mathscr{B}}$ for the coupling terms, but 
\begin{equation}
H_{\mathscr{A}}+H_{\mathscr{B}}+\mathrm{Im}\left\{ L_{\mathscr{B}}^{\ast }L_{%
\mathscr{A}}\right\} +n\mathrm{Im}\left[ L_{\mathscr{B}}^{\ast },L_{%
\mathscr{A}}\right] .
\end{equation}
We have picked up an $n$-dependent term. For pure cascading, the systems $%
\mathscr{A}$ and $\mathscr{B}$ are distinct and so $\left[ L_{\mathscr{B}%
}^{\ast },L_{\mathscr{A}}\right] =0$. However, the series product should
also apply to the situation where the systems share degrees of freedom. In
such cases the additional term is physically unreasonable as it depends on
the state of the noise.

It is not immediately obvious what is wrong with the construction. Going to
the double Fock vacuum representations and then using the vacuum version of
the series product would seem a reasonable thing to do. However, a fully
quantum description would involve the $\tilde{B}$ fields as well, and at a
schematic level this would involve one or more Bogoliubov boxes - something
conspicuously. We will give the correct procedure in this paper.

\section{Multi-Dimensional Gaussian Processes}

\subsection{Notation}

We will use the symbol $\triangleq $ to signify a defining equation. We will
denote the operations of complex conjugation, hermitean conjugation, and
more generally adjoint by *. For $X=[x_{ij}]$ an $n\times m$ array with
complex-valued entries, or more generally operator-valued entries, we write $%
X^{\ast }$ for the $m\times n$ array obtained by transposition of the array
and conjugation of the entries: that is the $ij$ entry is $x_{ji}^{\ast }$.
The transpose alone will be denotes as $X^{\top }$, that is the $m\times n$
array with $ij$ entry $x_{ji}$. We will also use the notation $%
X^{\#}=(X^{\top })^{\ast }$ which is the $n\times m$ array with $ij$ entry $%
x_{ij}^{\ast }$.

\subsection{Finite Dimensional Gaussian States}

\label{sec:FD_Gaussian}Let $a_{1},\cdots ,a_{d}$ be the annihilation
operators for $d$ independent oscillators. We consider a mean zero Gaussian
state with second moments 

\begin{equation}
n_{ij}=\langle a_{i}^{\ast }a_{j}\rangle ,\quad m_{ij}=\langle
a_{i}a_{j}\rangle ,
\end{equation}
which we assemble into a hermitean $d\times d$ matrix, $N$, with entries $%
n_{ji}^{\ast }=n_{ij}$, and a symmetric matrix $M$ is the $d\times d$ matrix
with entries $m_{ij}=m_{ji}$. The \emph{covariance matrix} is 
\begin{equation}
F=\left[ 
\begin{array}{cc}
I+N^{\top } & M \\ 
M^{\ast } & N
\end{array}
\right] .
\end{equation}
In order to yield mathematically correct variances, we must have both $F$
and $N$ positive. The vacuum state is characterized by having $N=M=0$, that
is 
\begin{equation}
F_{\mathrm{vac}}\equiv \left[ 
\begin{array}{cc}
I & 0 \\ 
0 & 0
\end{array}
\right] .
\label{eq:cov}
\end{equation}

The covariance matrix $F$ defined by (\ref{eq:cov}) must be positive
semi-definite, as will be the matrices $N$ and $I+N^{\top }$. We must also
have ran$\left( M\right) \subseteq $ran$\left( I+N^{\top }\right) $ and $%
MN^{-}M^{\ast }\leq I+N$, where $N^{-}$ is the Moore-Penrose inverse of $N$.

A linear transformation of the form 
\begin{eqnarray}
\tilde{a}=Ua+Va^{\#},
\end{eqnarray}
that is $\tilde{a}_{j}=\sum_{k}\left( U_{jk}a_{k}+V_{jk}a_{k}^{\ast }\right) 
$, is called a \textit{Bogoliubov transformation} if we have again the
canonical commutation relations for the primed operators.

The transformation $\tilde{a}=Ua+Va^{\#}$ is Bogoliubov if and only if the
following identities hold $UU^{\ast }=I+VV^{\ast },\quad UV^{\top }=VU^{\top
}.$

This is easily established by inspection, as are the following.

\begin{lemma}{Lemma}
Let $\tilde{a}=Ua+Va^{\#}$ be a Bogoliubov transformation, then the
covariance matrix for $\tilde{a}$ is 
\begin{eqnarray}
\tilde{F}=WFW^{\dag }
\end{eqnarray}
where $W=\Delta \left( U,V\right) $. In particular, the new matrices are 
\begin{eqnarray}
N^{\prime } &=&V^{\#}V^{\top }+V^{\#}N^{\top }V^{\top }+U^{\#}M^{\ast
}V^{\top }  +V^{\#}MU^{\top }+U^{\#}NU^{\top }, \nonumber\\
M^{\prime } &=&UV^{\top }+UN^{\top }V^{\top }+VM^{\ast }V^{\top }  +UM^{\ast }U^{\top }+VNU^{\top }.
\end{eqnarray}
\end{lemma}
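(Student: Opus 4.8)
The plan is to compute the covariance matrix of the transformed operators $\tilde{a}=Ua+Va^{\#}$ directly from its definition, by expressing the entries $\tilde{N}=N'$ and $\tilde{M}=M'$ in terms of the original moments $n_{ij}=\langle a_i^\ast a_j\rangle$ and $m_{ij}=\langle a_i a_j\rangle$, and then recognize the resulting bilinear expressions as the block entries of a matrix congruence $\tilde F = WFW^\dagger$. First I would assemble the $2d$-vector of operators $\mathbf{a}=\begin{bmatrix} a \\ a^{\#}\end{bmatrix}$ and observe that the Bogoliubov transformation acts linearly as $\tilde{\mathbf{a}}=W\mathbf{a}$ where $W=\Delta(U,V)=\begin{bmatrix} U & V \\ V^{\#} & U^{\#}\end{bmatrix}$; this doubled form is exactly the structure that makes the covariance transform cleanly. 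I would also note that the covariance matrix $F$ can be written compactly as $F=\langle \mathbf{a}\,\mathbf{a}^{\dagger}\rangle$ with the appropriate ordering, so that $N=\langle a^{\#}a^{\top}\rangle$ and $M=\langle a\,a^{\top}\rangle$ sit in the blocks prescribed by the definition of $F$.

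The key computational step is then essentially a one-line matrix identity once the bookkeeping is set up: $\tilde F=\langle \tilde{\mathbf{a}}\,\tilde{\mathbf{a}}^{\dagger}\rangle=\langle W\mathbf{a}\,\mathbf{a}^{\dagger}W^{\dagger}\rangle = W\langle\mathbf{a}\,\mathbf{a}^{\dagger}\rangle W^{\dagger}=WFW^{\dagger}$, using that $W$ is a fixed matrix of scalars that pulls outside the expectation. The real work is extracting the two scalar block identities for $N'$ and $M'$ by multiplying out the $2\times 2$ block product $WFW^{\dagger}$ and matching the lower-right block (giving $N'$) and the upper-right block (giving $M'$). I would carry out this block multiplication carefully, tracking which of $U,V,U^{\#},V^{\#}$ and $U^{\top},V^{\top},N,N^{\top},M,M^{\ast}$ combine to produce each of the five terms in the stated formulas for $N'$ and $M'$.

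The main obstacle here is not any deep idea but rather the conjugation and transposition bookkeeping: the definition of $F$ places $I+N^{\top}$ (not $N$) in the upper-left block, and the blocks of $W^{\dagger}$ are $U^{\ast},V^{\ast},(V^{\#})^{\ast}=V^{\top},(U^{\#})^{\ast}=U^{\top}$, so one must be scrupulous about when a $\#$, a $\top$, or a $\ast$ appears. I expect the delicate point to be verifying that the extra identity block $I$ inside $I+N^{\top}$ propagates to give precisely the $V^{\#}V^{\top}$ term in $N'$ and the $UV^{\top}$ term in $M'$ (these are the vacuum/commutator contributions), and that the remaining four terms in each formula line up with the genuinely $N$- and $M$-dependent pieces. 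I would double-check the result against the Bogoliubov constraints $UU^{\ast}=I+VV^{\ast}$ and $UV^{\top}=VU^{\top}$ to confirm that $\tilde F$ again has the required block structure $\begin{bmatrix} I+N'^{\top} & M' \\ M'^{\ast} & N'\end{bmatrix}$ of a bona fide covariance matrix, which serves as a consistency check that the transposition conventions have been applied correctly throughout.
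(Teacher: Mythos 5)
Your proposal is correct and is exactly the computation the paper intends --- the lemma is stated there without proof (``easily established by inspection''), and your doubled-up congruence $\tilde F=\langle W\mathbf{a}\,\mathbf{a}^{\dagger}W^{\dagger}\rangle=WFW^{\dagger}$ followed by block multiplication is that inspection made explicit. Carrying it out gives $N'$ exactly as printed but $M'=UV^{\top}+UN^{\top}V^{\top}+VM^{\ast}V^{\top}+UMU^{\top}+VNU^{\top}$, i.e.\ the term printed as $UM^{\ast}U^{\top}$ in the statement should read $UMU^{\top}$ (a typo in the paper: $M$ is symmetric but not generally real, so $M^{\ast}\neq M$), which your method detects correctly.
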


\begin{lemma}{Lemma}
\label{Prop:W_vac}
 Given $a_{\mathrm{vac}}$ with the choice of the vacuum state, the
Bogoliubov transformation $a=Ua_{\mathrm{vac}}+Va_{\mathrm{vac}}^{\#}$ leads to
operators with the covariance matrix 
\begin{eqnarray}
F=WF_{\mathrm{vac}}W^{\ast }=\left[ 
\begin{array}{cc}
I+N^{\top } & M \\ 
M^{\ast } & N
\end{array}
\right]
\end{eqnarray}
where $W=\Delta \left( U,V\right) $ and 
\begin{eqnarray}
N=V^{\#}V^{\top },\quad M=UV^{\top }.
\end{eqnarray}
\end{lemma}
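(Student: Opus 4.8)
The plan is to derive this as the vacuum specialization of the preceding lemma, so almost all of the work is already in hand. First I would record how the transformation law $\tilde{F}=WFW^{\ast}$ arises at the level of the covariance matrix: writing $\breve{a}=\bigl[\begin{smallmatrix}a\\ a^{\#}\end{smallmatrix}\bigr]$ and $W=\Delta(U,V)=\bigl[\begin{smallmatrix}U&V\\ V^{\#}&U^{\#}\end{smallmatrix}\bigr]$, the Bogoliubov transformation reads $\breve{a}=W\breve{a}_{\mathrm{vac}}$, and since $W$ has c-number entries it passes straight through the expectation, giving $F=\langle\breve{a}\,\breve{a}^{\ast}\rangle=W\langle\breve{a}_{\mathrm{vac}}\,\breve{a}_{\mathrm{vac}}^{\ast}\rangle W^{\ast}=WF_{\mathrm{vac}}W^{\ast}$. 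Here $F_{\mathrm{vac}}=\bigl[\begin{smallmatrix}I&0\\ 0&0\end{smallmatrix}\bigr]$ because the vacuum has $N=M=0$, the identity block coming from the canonical commutator $[a_{i},a_{j}^{\ast}]=\delta_{ij}$ sitting in the $(1,1)$ entry.

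The substantive step is then a single block multiplication. Since $F_{\mathrm{vac}}$ is the projection onto the first block, $WF_{\mathrm{vac}}=\bigl[\begin{smallmatrix}U&0\\ V^{\#}&0\end{smallmatrix}\bigr]$, and multiplying on the right by $W^{\ast}=\bigl[\begin{smallmatrix}U^{\ast}&V^{\top}\\ V^{\ast}&U^{\top}\end{smallmatrix}\bigr]$ yields
\[
F=WF_{\mathrm{vac}}W^{\ast}=\left[\begin{array}{cc} UU^{\ast} & UV^{\top}\\ V^{\#}U^{\ast} & V^{\#}V^{\top}\end{array}\right].
\]
Comparing with the target block form immediately reads off $N=V^{\#}V^{\top}$ from the $(2,2)$ block and $M=UV^{\top}$ from the $(1,2)$ block.

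The only step that is not pure bookkeeping, and hence where I would be most careful, is reconciling the $(1,1)$ block $UU^{\ast}$ with the required $I+N^{\top}$. For this I would invoke the Bogoliubov identity $UU^{\ast}=I+VV^{\ast}$ established just before the lemma, together with $N^{\top}=(V^{\#}V^{\top})^{\top}=V(V^{\#})^{\top}=VV^{\ast}$, so that $UU^{\ast}=I+N^{\top}$ as needed. The $(2,1)$ block is consistent since $(UV^{\top})^{\ast}=(V^{\top})^{\ast}U^{\ast}=V^{\#}U^{\ast}=M^{\ast}$. The persistent hazard is keeping the three operations straight, in particular $(V^{\top})^{\ast}=V^{\#}$ and $(V^{\#})^{\top}=V^{\ast}$; once these are fixed, the identification of all four blocks is forced.

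Equivalently, and even more directly, I could bypass the block algebra and simply set $N=0$ and $M=0$ in the closed-form expressions for $N^{\prime}$ and $M^{\prime}$ supplied by the preceding lemma: four of the five terms in each expression vanish, leaving precisely $N^{\prime}=V^{\#}V^{\top}$ and $M^{\prime}=UV^{\top}$, with the full matrix identity $F=WF_{\mathrm{vac}}W^{\ast}$ inherited from that lemma. I would present the block computation as the primary argument and note this substitution as the quick consistency check.
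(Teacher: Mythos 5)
Your proof is correct and matches the paper's intended argument: the paper offers no explicit proof beyond "easily established by inspection," relying on the general transformation law $\tilde{F}=WFW^{\ast}$ of the preceding lemma, of which your block computation (and your substitution $N=M=0$ into the general formulas for $N^{\prime}$, $M^{\prime}$) is exactly the specialization. Your careful handling of the $(1,1)$ block via $UU^{\ast}=I+VV^{\ast}$ and $N^{\top}=VV^{\ast}$ is the one nontrivial point, and you get it right.
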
 

We note that the determinant of the covariance matrix is preserved under
Bogoliubov transformations. In particular, if we have $F=WF_{\mathrm{vac}%
}W^{\ast }$, as in the last Proposition, then $F$ must also be singular.
This means that if we wish to obtain a given covariance matrix $F$ for $d$
modes by a Bogoliubov transformation of vacuum modes, we will typically need
a larger number $D$ of these modes with $F$ being a sub-block of a
transformed matrix $WF_{\mathrm{vac}}W^{\ast }$. The example in the Theorem
shows that in order to obtain the $d=1$ covariance 
\begin{eqnarray}
F=\left[ 
\begin{array}{cc}
1+n & 0 \\ 
0 & n
\end{array}
\right]
\end{eqnarray}
we need a Bogoliubov transformation of $D=2$ modes. We remark that we may
obtain the covariance 
\begin{eqnarray}
F=\left[ 
\begin{array}{cc}
1+n & m \\ 
m^{\ast } & n
\end{array}
\right] ,
\end{eqnarray}
with the constraint $\left| m\right| ^{2}\leq n\left( n+1\right) $ ensuring
positivity, from 2 vacuum modes via \cite{HHKKR02,G_QWN_ME} 
\begin{eqnarray}
\tilde{a}=\sqrt{n+1-\frac{1}{n}\left| m\right| ^{2}}a_{1}+\sqrt{n}%
a_{2}^{\ast }+\frac{m}{\sqrt{n}}a_{2}.  \label{eq:bog_m}
\end{eqnarray}
The maximal case $\left| m\right| ^{2}=n\left( n+1\right) $ may be obtained
from a \textit{single} mode $a_{1}$ via $a=\sqrt{n+1}a_{1}+e^{i\theta }\sqrt{%
n}a_{1}^{\ast }$ where $m\equiv \sqrt{n\left( n+1\right) }e^{i\theta }$.

\subsection{Quantum Ito Calculus: Gaussian Noise}

One would like to extend this to non-vacuum inputs, in particular, those
with general flat power Gaussian states for the noise. (We restrict to a
single noise channel for transparency but the generalization is
straightforward enough.) It is possible to construct noises having the
following quantum It\={o} table 
\begin{eqnarray}
dB_{i}dB_{j}^{\ast } &=&\left( n_{ji}+\delta _{ij}\right) dt,\quad
dB_{i}^{\ast }dB_{j}=n_{ij}dt,  \notag \\
dB_{i}dB_{j} &=&m_{ij}dt,\quad dB_{i}^{\ast }dB_{j}^{\ast }=m_{ji}^{\ast }dt,
\label{eq:table_non_Fock}
\end{eqnarray}
where $N=[n_{ij}]$ and $M=[m_{ij}]$ have the same properties and constraints
as introduced above.

In reality, we are assuming that the fields $B_{j}\left( t\right) $
correspond to a representation on a double Fock space, say, 
\begin{equation}
B(t)=U\left[ 
\begin{array}{c}
A_{+}\left( t\right) \otimes I \\ 
I\otimes A_{-}(t)
\end{array}
\right] +V\left[ 
\begin{array}{c}
A_{+}\left( t\right) ^{\#}\otimes I \\ 
I\otimes A_{-}(t)^{\#}
\end{array}
\right] 
\end{equation}
where $A_{k}\left( t\right) =\left[ 
\begin{array}{c}
A_{k,1}\left( t\right)  \\ 
\vdots  \\ 
A_{k.d}\left( t\right) 
\end{array}
\right] $ are copies of the Fock fields encountered above, and where $N=V^{\#}V,M=UV^{\top }$ as in Proposition 
\ref{Prop:W_vac}.

The underlying mathematical problem is that we are trying to implement a
canonical transformation that is not inner \cite{partha,Shale,DG}- specifically the various
representations for different pairs $\left( N,M\right) $ are not unitarily
equivalent.

Instead we must restrict to QSDE models in the general Gaussian case which
are driven by $B$ and $B^{\ast }$ only. We in fact find the class of QSDEs 
\begin{equation}
dU\left( t\right) =\left\{ L_{k}dB_{k}^{\ast }\left( t\right) -L_{k}^{\ast
}dB_{k}\left( t\right) +K^{\left( N,M\right) }dt\right\} \,U\left( t\right)
\label{eq:QSDE_non_Fock1}
\end{equation}
generating unitaries and we now require that 
\begin{equation}
K^{\left( N,M\right) }=-\frac{1}{2}(\delta _{ij}+n_{ji})L_{i}^{\ast }L_{j}-%
\frac{1}{2}n_{ij}L_{i}L_{j}^{\ast }+\frac{1}{2}m_{ij}L_{i}^{\ast
}L_{j}^{\ast }+\frac{1}{2}m_{ji}^{\ast }L_{i}L_{j}-iH,
\end{equation}
with $H$ again self-adjoint.

Let us denote the conditional expectation from the algebra of
operators on the system-tensor-Fock Hilbert space down to the system
operators (i.e., the partial trace over the Gaussian state) as $\mathbb{E}_{\left(
N,M\right) }\left[ \cdot |\mathrm{sys}\right] $. As the differentials $dB_{k}\left( t\right) $ and $%
dB_{k}\left( t\right) ^{\ast }$ are It\={o} (future pointing) their products
with adapted operators will have conditional expectation zero. Therefore 
\begin{equation}
\mathbb{E}_{\left( N,M\right) }\left[ dU_{t}|\mathrm{sys}\right] =K^{\left(
N,M\right) }\,\mathbb{E}_{\left( N,M\right) }\left[ U_{t}|\mathrm{sys}\right]
\,dt
\end{equation}
and we deduce that 
\begin{equation}
\mathbb{E}_{\left( N,M\right) }\left[ U_{t}|\mathrm{sys}\right]
=e^{tK^{\left( N,M\right) }}.
\end{equation}

The corresponding Heisenberg-Langevin equations are of the form 
\begin{equation}
dj_{t}(X)=j_{t}(\left[ X,L_{k}\right] )dB_{k}^{\ast }+j_{t}(\left[
L_{k}^{\ast },X\right] )dB_{k}+j_{t}(\mathcal{L}^{\left( N,M\right) }X)dt
\end{equation}
where the new Lindbladian is 
\begin{eqnarray}
\mathcal{L}^{\left( N,M\right) }X &=&\frac{1}{2}(\delta _{ij}+n_{ji})\big\{%
L_{i}^{\ast }\left[ X,L_{j}\right] +\left[ L_{i}^{\ast },X\right] L_{j}\big\}
\notag \\
&&+\frac{1}{2}n_{ij}\big\{L_{i}\left[ X,L_{j}^{\ast }\right] +\left[ L_{i},X%
\right] L_{j}^{\ast }\big\}  \notag \\
&&-\frac{1}{2}m_{ij}\big\{L_{i}^{\ast }\left[ X,L_{j}^{\ast }\right] +\left[
L_{i}^{\ast },X\right] L_{j}^{\ast }\big\}  \notag \\
&&-\frac{1}{2}m_{ji}^{\ast }\big\{L_{i}\left[ X,L_{j}\right] +\left[ L_{i},X%
\right] L_{j}\big\}-i\left[ X,H\right] .  \notag \\
&&  \label{eq:Lindblad_non_Fock}
\end{eqnarray}
Likewise, we find that 
\begin{equation}
\mathbb{E}_{\left( N,M\right) }\left[ j_{t}\left( X\right) |\mathrm{sys}%
\right] =e^{t\mathcal{L}^{\left( N,M\right) }}X.
\end{equation}

A little algebra allows us to relate these to the vacuum expressions: 
\begin{eqnarray}
K^{\left( N,M\right) } =K-\frac{1}{2}n_{ji}L_{i}^{\ast }L_{j}-\frac{1}{2}%
n_{ij}L_{i}L_{j}^{\ast } +\frac{1}{2}m_{ij}L_{i}^{\ast }L_{j}^{\ast }+\frac{1%
}{2}m_{ji}^{\ast }L_{i}L_{j},  \label{eq:K_form}
\end{eqnarray}
\begin{eqnarray}
\mathcal{L}^{\left( N,M\right) }X &=&\mathcal{L}X+\frac{1}{2}n_{ji}\big\{%
L_{i}^{\ast }\left[ X,L_{j}\right] +\left[ L_{i}^{\ast },X\right] L_{j}\big\}
\notag \\
&&+\frac{1}{2}n_{ij}\big\{L_{i}\left[ X,L_{j}^{\ast }\right] +\left[ L_{i},X%
\right] L_{j}^{\ast }\big\}  \notag \\
&&-\frac{1}{2}m_{ij}\big\{L_{i}^{\ast }\left[ X,L_{j}^{\ast }\right] +\left[
L_{i}^{\ast },X\right] L_{j}^{\ast }\big\}  \notag \\
&&-\frac{1}{2}m_{ji}^{\ast }\big\{L_{i}\left[ X,L_{j}\right] +\left[ L_{i},X%
\right] L_{j}\big\}  \notag \\
&\equiv &\mathcal{L}X+\frac{1}{2}n_{ji}\big\{\left[ L_{i}^{\ast },\left[
X,L_{j}\right] \right] +\left[ \left[ L_{i}^{\ast },X\right] ,L_{j}\right] %
\big\}  \notag \\
&&+\frac{1}{2}m_{ij}\left[ L_{j}^{\ast }\left[ L_{i}^{\ast },X\right] \right]
+\frac{1}{2}m_{ij}^{\ast }\left[ \left[ X,L_{i}\right] ,L_{j}\right] . 
\notag \\
&&  \label{eq:Lind_form}
\end{eqnarray}

\section{Representation-Free Form}

\label{Sec:Rep_Free} Returning to the problem stated in the Introduction, we
have that \textit{all} the $U_{t}^{\left( N,M\right) }$ arise from the \textit{same} physical dynamical evolution $U_{t}$,
and the dynamics show not depend on the state! The $U_{t}^{\left( N,M\right) }$ unfortunately belong to representations that are 
\textit{not} generally unitarily equivalent! There should be some sense in which the QSDEs for the
various $U_{t}^{\left( N,M\right) }$ should in some sense be equivalent.
These QSDEs will depend explicitly on the state parameters $\left(
N,M\right) $ of the input field, but what we would like to do is to show
that there is nevertheless a representation-free version of each of these
QSDEs in each fixed representation.

We now show that there is a way of presenting the unitary (\ref
{eq:QSDE_non_Fock1}) and Heisenberg (\ref{eq:Lindblad_non_Fock}) QSDEs so as
to be independent of the state parameters $(N,M)$.

\begin{theorem}{Theorem}
\textbf{(Representation-Free Form)} 
The non-Fock QSDEs (\ref{eq:QSDE_non_Fock1}) and (\ref{eq:Lindblad_non_Fock}%
) may be written in the equivalent Stratonovich forms 
\begin{eqnarray}
dU &=&dA_{k}^{\ast }\circ L_{k}U-L_{k}^{\ast }U\circ d A_{k}+KU\left( t\right) \circ dt,  \label{eq:Strat_QSDE} \\
dj_{t}(X) &=& d A_{k}^{\ast }\circ j_{t}(\left[ X,L_{k}\right] )+j_{t}(%
\left[ L_{k}^{\ast },X\right] )\circ d A_{k}  +j_{t}(\mathcal{L}X)\circ dt,  \label{eq:Strat_Heis}
\end{eqnarray}
respectively, where $K$ and $\mathcal{L}$ are the Fock representation
expressions (\ref{eq:K_Fock}) and (\ref{eq:Linblad_Fock}).
\end{theorem}

\begin{proof}
We first observe that 
\begin{equation}
dB_{k}^{\ast }\circ L_{k}U=dB_{k}^{\ast }L_{k}U+\frac{1}{2}dB_{k}^{\ast
}L_{k}dU
\end{equation}
and substituting the QSDE (\ref{eq:QSDE_non_Fock1}) for $dU$ and using the
quantum It\={o} table (\ref{eq:table_non_Fock}) gives 
\begin{equation}
dB_{k}^{\ast }\circ L_{k}U=L_{k}UdB_{k}^{\ast }+\frac{1}{2}L_{k}\left(
m_{kj}^{\ast }L_{j}-n_{kj}L_{j}^{\ast }\right) Udt,
\end{equation}
and similarly 
\begin{equation}
-L_{k}^{\ast }U\circ dB_{k}=-L_{k}^{\ast }UdB_{k}-\frac{1}{2}L_{k}^{\ast
}dUdB_{k}=-L_{k}^{\ast }UdB_{k}-\frac{1}{2}L_{k}^{\ast }\left(
n_{jk}L_{j}-m_{ki}L_{j}^{\ast }\right) dt.
\end{equation}
Combining these terms and using the identity (\ref{eq:K_form}) shows that (%
\ref{eq:Strat_QSDE}) is equivalent to (\ref{eq:QSDE_non_Fock1}).

For the Heisenberg equation, we first note that 
\begin{eqnarray}
dB_{k}^{\ast }\circ j_{t}(\left[ X,L_{k}\right] ) &=&dB_{k}^{\ast }j_{t}(%
\left[ X,L_{k}\right] )+\frac{1}{2}dB_{k}^{\ast }dj_{t}(\left[ X,L_{k}\right]
) \nonumber\\
&=&j_{t}(\left[ X,L_{k}\right] )dB_{k}^{\ast } \nonumber\\
&&+\frac{1}{2}dB_{k}^{\ast }%
\bigg\{j_{t}(\left[ \left[ X,L_{k}\right] ,L_{j}\right] )dB_{j}^{\ast
}+j_{t}\left( \left[ L_{j}^{\ast },\left[ X,L_{k}\right] \right] \right)
dB_{j}\bigg\} \nonumber\\
&=&j_{t}(\left[ X,L_{k}\right] )dB_{k}^{\ast }+j_{t}\big(\frac{1}{2}%
m_{kj}^{\ast }\big[\left[ X,L_{k}\right] ,L_{j}\big]+\frac{1}{2}n_{kj}\left[
L_{j}^{\ast },\left[ X,L_{k}\right] \right] \big)dt, \nonumber \\
\quad
\end{eqnarray}
and similarly 
\begin{equation}
j_{t}(\left[ L_{k}^{\ast },X\right] )\circ dB_{k}=j_{t}(\left[ L_{k}^{\ast
},X\right] )dB_{k}+j_{t}\bigg(\frac{1}{2}n_{jk}\big[\left[ L_{k}^{\ast },X%
\right] ,L_{j}\big]+\frac{1}{2}m_{jk}\left[ L_{j}^{\ast },\big[L_{k}^{\ast
},X\right] \big]\bigg)dt.
\end{equation}
Combining these terms and using the identity (\ref{eq:Lind_form}) shows that
(\ref{eq:Strat_Heis}) is equivalent to (\ref{eq:Lindblad_non_Fock}).
\end{proof}

Note that in both equations (\ref{eq:Strat_QSDE}) and (\ref{eq:Strat_Heis})
the Stratonovich differentials occur in Wick order relative to the integrand
terms. What is remarkable about these relations is that they are structurally the same as the Fock vacuum form of the QSDEs  with $S=I$. 
We say that the equations (\ref{eq:Strat_QSDE}) and (\ref{eq:Strat_Heis})
are \textit{representation-free} in the sense that they do not depend on the
parameters $N$ and $M$ determining the state of the noise.

\section{White Noise Description}

We now present a more formal, but insightful account of quantum stochastic
processes. Consider a collection of quantum noise input processes $%
\{b_{k}\left( t\right) :t\in \mathbb{R},k=1,\cdots ,d\}$ obeying the
commutation relations 
\begin{equation}
\left[ b_{j}\left( t\right) ,b_{k}^{\ast }\left( s\right) \right] =\delta
\left( t-s\right) ,\qquad \left[ b_{j}^{\ast }\left( t\right) ,b_{k}^{\ast
}\left( s\right) \right] =\left[ b_{j}\left( t\right) ,b_{k}\left( s\right) %
\right] =0.
\end{equation}
We wish to model the interaction of a quantum mechanical system driven by
these processes, and to this end introduce a unitary dynamics given by 
\begin{equation}
U\left( t\right) =\vec{\mathbf{T}}\exp \left\{ -i\int_{0}^{t}\Upsilon
_{s}ds\right\} 
\end{equation}
where (with an implied summation convention with range 1,$\cdots ,d$) 
\begin{equation}
-i\Upsilon _{t}=L_{k}\otimes b_{k}^{\ast }\left( t\right) -L_{k}^{\ast
}\otimes b_{k}\left( t\right) -iH\otimes I.
\end{equation}
Here $L_{k}$ and $H=H^{\ast }$ are system operators. The time ordering $\vec{%
\mathbf{T}}$ is understood in the usual sense of a Dyson series expansion.
From this we may arrive at 
\begin{equation}
\dot{U}\left( t\right) =L_{k}b_{k}^{\ast }\left( t\right) U\left( t\right)
-L_{k}^{\ast }b_{k}\left( t\right) U\left( t\right) -iHU\left( t\right) .
\label{eq:SCHROd}
\end{equation}

We claim that $U\left( t\right) $ should correspond to the evolution
operator for $\mathbf{G}\sim \left( S=I,L,H\right) $ without due reference
to a particular state for the noise. If we fix the state, say the vacuum,
then we use Wick ordering to compute the partial expectations with respect
to that state.

To see how to proceed, let us consider a general quantum stochastic integral 
$X\left( t\right) $ described by a formal equation
\begin{equation}
\dot{X}\left( t\right) =b_{j}\left( t\right) ^{\ast }x_{jk}\left( t\right)
b_{k}\left( t\right) +b_{j}\left( t\right) ^{\ast }x_{j0}\left( t\right)
+x_{0k}\left( t\right) b_{k}\left( t\right) +x_{00}\left( t\right) .
\label{eq:wn_qsi}
\end{equation}
where the terms $x_{\alpha \beta }\left( t\right) $ are ``adapted'' in the
formal sense that they do not depend on the noises $b_{k}\left( s\right) $
for $s>t$. As we are talking about the vacuum representation for the time
being, we can bootstrap from the vacuum $|\Omega \rangle $ to
construct the Fock space as the completion of the span of all vectors of the
type $\int f_{k\left( 1\right) }\left( t_{1}\right) b_{k\left( 1\right)
}\left( t_{1}\right) ^{\ast }\cdots f_{k\left( n\right) }\left( t_{n}\right)
b_{k\left( n\right) }\left( t_{m}\right) ^{\ast }|\Omega \rangle $, and
moreover we can build up the domain of exponential vectors. We quickly see
that (\ref{eq:wn_qsi}), with Wick ordered right hand side, corresponds to
the QSDE
\begin{equation}
dX\left( t\right) =x_{jk}\left( t\right) d\Lambda _{lk}\left( t\right)
+x_{j0}\left( t\right) dB_{j}\left( t\right) ^{\ast }+x_{0k}\left( t\right)
dB_{k}\left( t\right) +x_{00}\left( t\right) dt.
\end{equation}
Our issue however is how do we put to Wick order a given expression, for
instance, the right hand side of (\ref{eq:SCHROd}). 

\begin{proposition}{Proposition}
For the process $X\left( t\right) $ described by (\ref{eq:wn_qsi}), we have
\begin{eqnarray}
b_{k}\left( t\right) X\left( t\right)  &=&X\left( t\right) b_{k}\left(
t\right) +\frac{1}{2}x_{kl}\left( t\right) b_{l}\left( t\right) +\frac{1}{2}%
x_{k0}\left( t\right) ,  \notag \\
X\left( t\right) b_{k}\left( t\right) ^{\ast } &=&b_{k}\left( t\right)
^{\ast }X\left( t\right) +\frac{1}{2}b_{j}\left( t\right) ^{\ast
}x_{j0}\left( t\right) +\frac{1}{2}x_{0k}\left( t\right) .
\label{eq:wn_Strat}
\end{eqnarray}
\end{proposition}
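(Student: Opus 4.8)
The plan is to realise $X(t)$ through its defining integral $X(t)=X(0)+\int_0^{t}\dot X(s)\,ds$, with $\dot X(s)$ given by (\ref{eq:wn_qsi}), and thereby to reduce both assertions in (\ref{eq:wn_Strat}) to the evaluation of the coincident-time commutators $[b_k(t),X(t)]$ and $[X(t),b_k(t)^{\ast}]$. The only ingredients are the singular CCR $[b_j(t),b_k(s)^{\ast}]=\delta_{jk}\delta(t-s)$, the vanishing of $[b_j(t),b_k(s)]$ and $[b_j(t)^{\ast},b_k(s)^{\ast}]$, and the adaptedness hypothesis, which ensures that each coefficient $x_{\alpha\beta}(s)$ commutes with $b_k(t)$ and $b_k(t)^{\ast}$ for every $s<t$; the coincident instant $s=t$ is a single point and so contributes nothing to the $ds$-integration. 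Since $X(0)$ is a system operator it drops out of either commutator at once.

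First I would push $b_k(t)$ under the integral and commute it rightward through each monomial of $\dot X(s)$. The only surviving contributions arise when $b_k(t)$ crosses a creation operator $b_j(s)^{\ast}$, each crossing depositing $\delta_{kj}\delta(t-s)$; the annihilation factors and the adapted coefficients pass through inertly, and there are no iterated deltas because $[b_k(t),b_l(s)]=0$. This collapses the first computation to
\[
[b_k(t),X(t)]=\int_0^{t}\delta(t-s)\,\big\{x_{kl}(s)\,b_l(s)+x_{k0}(s)\big\}\,ds .
\]
The mirror computation for $[X(t),b_k(t)^{\ast}]$ is carried out identically, with the roles of creation and annihilation interchanged: now $b_k(t)^{\ast}$ fails to commute only with the annihilation operators occurring in $\dot X(s)$, so the residue is a $\delta(t-s)$ multiplying the coefficients that accompany those annihilators, which after integration yields the companion relation in (\ref{eq:wn_Strat}).

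The step I expect to be the real obstacle, and indeed the conceptual crux, is the evaluation of these boundary integrals --- that is, the provenance of the factor $\tfrac12$. The Dirac mass $\delta(t-s)$ sits exactly at the upper endpoint $s=t$ of the range $[0,t]$, so at the purely singular level the value is genuinely ambiguous (both $0$ and $1$ are defensible). I would resolve this by smearing the CCR delta with a symmetric approximate identity $\delta_{\varepsilon}$ and passing to the limit: precisely half the mass of $\delta_{\varepsilon}(t-\cdot)$ lies inside $[0,t]$, so $\int_0^{t}\delta_{\varepsilon}(t-s)f(s)\,ds\to\tfrac12 f(t)$. This symmetric splitting is exactly the midpoint prescription built into the Stratonovich calculus of the Definition, and it delivers $\tfrac12 x_{kl}(t)b_l(t)+\tfrac12 x_{k0}(t)$ together with its companion. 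As an independent check on both the index bookkeeping and the symmetric choice of endpoint value, I would verify that the two identities in (\ref{eq:wn_Strat}) are formal adjoints of one another, the second following from taking the adjoint of the first applied to the process $X(t)^{\ast}$ (whose coefficients are the conjugate-transposes of the $x_{\alpha\beta}$).
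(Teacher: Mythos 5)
Your proposal is correct and follows essentially the same route as the paper: both express $[b_k(t),X(t)]$ as $\int_0^t[b_k(t),\dot X(s)]\,ds$, apply the singular CCR to produce $\delta(t-s)$ against the coefficients, and attribute the factor $\tfrac12$ to the half-contribution of the delta at the endpoint $s=t$ (the paper simply asserts this, whereas you justify it by symmetric smearing, consistent with the midpoint/Stratonovich convention). Your adjointness cross-check is a harmless addition not present in the paper.
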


\bigskip 

We may justify this as follows:
\begin{eqnarray}
\left[ b_{k}\left( t\right) ,X\left( t\right) \right]  &=&\int_{0}^{t}\left[
b_{k}\left( t\right) ,\dot{X}\left( s\right) \right] ds=\int_{0}^{t}\delta
\left( t-s\right) \left\{ x_{kl}\left( s\right) b_{l}\left( s\right)
+x_{k0}\left( s\right) \right\} \nonumber  \\
&=&\frac{1}{2}x_{kl}\left( t\right) b_{l}\left( t\right) +\frac{1}{2}%
x_{k0}\left( t\right) 
\end{eqnarray}
with the factor of $\frac{1}{2}$ coming from the half-contribution of the $%
\delta $-function. Evidently what the equations in (\ref{eq:wn_Strat})
correspond to is our definition of a Stratonovich differential - at least
for the Fock vacuum representation. While we can make a connection between (%
\ref{eq:wn_qsi}) and the rigorously defined Hudson-Parthasarathy processes,
it should be appreciated at the very least that (\ref{eq:wn_Strat}) is the
correct mnemonic for doing the Wick ordering - an attempt to convert into a
Dyson-type series expansion and Wick ordering under the iterated integral
signs to get a Maassen-Meyer kernel expansion shows this. At work here is an 
old principle that ``It\^{o}’s formula is the chain rule with Wick
ordering'' \cite{HS}. Let us now examine
(\ref{eq:SCHROd}) and put it to Wick ordered form. By a similar argument,
we have
\begin{equation}
\left[ b_{k}\left( t\right) ,U\left( t\right) \right] =\int_{0}^{t}\left[
b_{k}\left( t\right) ,\Upsilon \left( s\right) \right] U\left( s\right)
ds\equiv \frac{1}{2}L_{k}U\left( t\right) ,
\end{equation}
or $b_{k}\left( t\right) U\left( t\right) =U\left( t\right) b_{k}\left(
t\right) +\frac{1}{2}L_{k}U\left( t\right) $. By means of this we may place (%
\ref{eq:SCHROd}) into the Wick-ordered form 
\begin{equation}
U\left( t\right) =L_{k}b_{k}^{\ast }\left( t\right) U\left( t\right)
-L_{k}^{\ast }U\left( t\right) b_{k}\left( t\right) -(\frac{1}{2}L_{k}^{\ast
}L_{k}+iH)U\left( t\right) ,
\end{equation}
and picking up the correct vacuum damping (\ref{eq:K_Fock}), $K$, as a result.

Setting $X_{t}=U\left( t\right) (X\otimes I)U\left( t\right) $, the same
Wick ordering rule can be applied to the Heisenberg equations to obtain 
\begin{equation}
\dot{X}_{t}=\left\{ b_{k}^{\ast }\left( t\right) +\frac{1}{2}L_{k,t}^{\ast
}\right\} \left[ X,L_{k}\right] _{t}+\left[ L_{k}^{\ast },X\right]
_{t}\left\{ b_{k}\left( t\right) +\frac{1}{2}L_{k,t}\right\} +\frac{1}{i}%
U\left( t\right) \left[ X,H\right] U\left( t\right) .
\end{equation}
Here we use the notation $L_{k,t}=U\left( t\right) (L_{k}\otimes I)U\left(
t\right) $, etc.

We also remark that we may define the corresponding \textit{output fields}
by 
\begin{eqnarray}
b^{\mathrm{out}}_k (t) \triangleq U^\ast_T \, b(t) \, U_T,
\end{eqnarray}
where $T>t$. One may show that the input-output relations are 
\begin{eqnarray}
b^{\mathrm{out}}_k (t) \equiv b_k (t) + L_{k,t}.  \label{eq:i-o}
\end{eqnarray}

If, on the other hand, we want the state of the noise to be a mean-zero
Gaussian with correlations, say 
\begin{equation}
\left\langle b_{j}\left( t\right) ^{\ast }b_{k}\left( s\right) \right\rangle
=n_{jk}\,\delta \left( t-s\right) ,\quad \left\langle b_{j}\left( t\right)
b_{k}\left( s\right) \right\rangle =m_{jk}\,\delta \left( t-s\right) ,
\label{eq:cov_flat}
\end{equation}
then we represent the noise as 
\begin{equation}
b_{k}\left( t\right) =U_{jk}a_{+,k}\left( t\right) +V_{jk}a_{-,k}\left(
t\right) ^{\ast }  \label{eq:wn_bog}
\end{equation}
employing a suitable Bogoliubov transformation. Here we now have double the
number of quantum white noises $a_{+,k}$ and $a_{-,k}$ but these are
represented as Fock processes.

If we now substitute (\ref{eq:wn_bog}) into (\ref{eq:SCHROd}) we see
explicitly that the $a_{\pm ,k}$ are out Wick order, but this can be
rectified by the same sort of manipulation as above. Once the $a_{\pm
,k}\left( t\right) $ are Wick ordered, we have a equation which we can
interpret as the It\={o} non-Fock QSDE, and this leads to the correct
expressions $K^{\left( N,M\right) }$ and $\mathcal{L}^{\left( N,M\right) }$
in the unitary and flow equations respectively.

Given a Gaussian state $\left\langle \cdot \right\rangle $ on the noise, we
may introduce a conditional expectation according to $\mathbb{E}\left[ \cdot
|\mathrm{sys}\right] :A\otimes B\mapsto \left\langle B\right\rangle \,A$.
For instance, $\mathbb{E}\left[ U\left( t\right) |\mathrm{sys}\right] $ then
defines a contraction on the system Hilbert space and we have 
\begin{equation}
\mathbb{E}\left[ U\left( t\right) |\mathrm{sys}\right] =I_{\mathrm{sys}%
}+\sum_{n\geq 1}\left( -i\right) ^{n}\int_{\Delta _{n}\left( t\right) }%
\mathbb{E}\left[ \Upsilon _{s_{n}}\cdots \Upsilon _{s_{1}}|\mathrm{sys}%
\right] .
\end{equation}
Now the expression $\mathbb{E}\left[ \Upsilon _{s_{n}}\cdots \Upsilon
_{s_{1}}|\mathrm{sys}\right] $ will be a sum of products of the operators $%
L,-L^{\ast }$ and $H$ times a $n$-point function in the fields. Similarly,
we obtain a reduced Heisenberg equation. To compute these averages we need
to be able to calculate $n$-point functions of chronologically ordered
Gaussian fields - this is the realm of Wick's Theorem, so what we have
presented may be interpreted as a Gaussian Wick's Theorem \cite{Evans_Steer}. We of course recover the 
partial traces appearing in the previous section.

\section{Approximate Signal Generator for Thermal States}

In this section we show how to go from a general SLH model driven by
the output of a Degenerate Parametric Amplifier (DPA) to the limit where the same SLH model is driven by a
thermal white noise. We start with the single channel for simplicity.

\subsection{The Thermal White Noise as Idealization of the Output of a
Degenerate Parametric Amplifier}

We now show that in the strong coupling limit the output of a degenerate
parametric amplifier approximates a thermal white noise. the model consists
of a system of two cavities modes $c_{+}$ and $c_{-}$ coupled to input
processes $A_{+}\left( t\right) $ and $A_{-}\left( t\right) $ respectively.
Both inputs are taken to be in the vacuum state and the Schr\"{o}dinger
equation is 
\begin{equation}
\dot{U}_{t}=\sum_{i=+,-}L_{i}U\left( t\right) dA_{i}\left( t\right) ^{\ast
}-\sum_{i=+,-}L_{i}^{\ast }U\left( t\right) dA_{i}\left( t\right) -iH_{%
\mathrm{amp}}U_{t},
\end{equation}
with initial condition $U_{0}=I$ and 
\begin{equation}
L_{+}=\sqrt{2\kappa k}c_{+},\quad L_{-}=\sqrt{2\kappa k}c_{-}\text{ and }H_{%
\mathrm{amp}}=\frac{\varepsilon k}{i}\left( c_{+}c_{-}-c_{+}c_{-}\right) .
\end{equation}
Here $\varepsilon >\kappa $ and $k>0$ is a scaling parameter which we
eventually model to be large. It is more convenient to work with the white noises 
$a_{\pm }\left( t\right) $.

The model is linear and we obtain the input-output relations in the Laplace
domain to be \cite{GJN_squeeze} 
\begin{equation}
\left[ 
\begin{array}{c}
b\left[ s\right]  \\ 
\tilde{b}\left[ s\right] 
\end{array}
\right] =\Xi _{-}^{\left( k\right) }\left( s\right) \left[ 
\begin{array}{c}
a_{+}\left[ s\right]  \\ 
a_{-}\left[ s\right] 
\end{array}
\right] +\Xi _{+}^{\left( k\right) }\left( s\right) \left[ 
\begin{array}{c}
a_{+}\left[ s\right]  \\ 
a_{-}\left[ s\right] 
\end{array}
\right] 
\end{equation}
where $\Xi _{-}^{\left( k\right) }\left( s\right) =\left[ 
\begin{array}{cc}
u\left( s/k\right)  & 0 \\ 
0 & u\left( s/k\right) 
\end{array}
\right] ,\quad \Xi _{+}^{\left( k\right) }\left( s\right) =\left[ 
\begin{array}{cc}
0 & v\left( s/k\right)  \\ 
v\left( s/k\right)  & 0
\end{array}
\right] $ with the functions $u\left( s\right) =\frac{s^{2}-\kappa
^{2}-\varepsilon ^{2}}{s^{2}+2s\kappa +\kappa ^{2}-\varepsilon ^{2}},\quad
v\left( s\right) =\frac{2\kappa \varepsilon }{s^{2}+2\kappa +\kappa
^{2}-\varepsilon ^{2}}$.

In the limit $k\rightarrow \infty $ we find the static ($s$-independent)
coefficients 
\begin{equation}
\lim_{k\rightarrow \infty }\Xi _{-}^{\left( k\right) }\left( s\right) =\frac{%
\varepsilon ^{2}+\kappa ^{2}}{\varepsilon ^{2}-\kappa ^{2}}\left[ 
\begin{array}{cc}
1 & 0 \\ 
0 & 1
\end{array}
\right] ,\quad \lim_{k\rightarrow \infty }\Xi _{+}^{\left( k\right) }\left(
s\right) =\frac{2\varepsilon \kappa }{\varepsilon ^{2}-\kappa ^{2}}\left[ 
\begin{array}{cc}
0 & 1 \\ 
1 & 0
\end{array}
\right] .
\end{equation}
and returning to the time domain, the limit output fields are just a
Bogoliubov transform of the inputs 
\begin{equation}
b\left( t\right) =\sqrt{n+1}a_{+}\left( t\right) +\sqrt{n}a_{-}\left(
t\right) ,\quad \tilde{b}\left( t\right) =\sqrt{n}a_{+}\left( t\right) +%
\sqrt{n+1}a_{-}\left( t\right) ,
\end{equation}
Here the parameter $n$ corresponds is $n=\left( \frac{2\varepsilon \kappa }{%
\varepsilon ^{2}-\kappa ^{2}}\right) ^{2}.$

It is instructive to look closely at the finite $k$ equations. We have the
Heisenberg equations 
\begin{eqnarray}
\dot{c}_{+}\left( t\right)  &=&-k\kappa c_{+}\left( t\right) +k\varepsilon
c_{-}\left( t\right) -\sqrt{2\kappa k}a_{+}\left( t\right) , \nonumber \\
\dot{c}_{-}\left( t\right)  &=&-k\kappa c_{-}\left( t\right) +k\varepsilon
c_{+}\left( t\right) -\sqrt{2\kappa k}a_{-}\left( t\right) ,
\end{eqnarray}
and for $k$ large we may ignore the $\dot{c}_{+}\left( t\right) $ and $%
\dot{c}_{-}\left( t\right) $ terms leaving a pair of simultaneous equations
which we may solve to get 
\begin{eqnarray}
\sqrt{k}c_{+}\left( t\right) &\simeq &\frac{\sqrt{2\kappa }}{\varepsilon
^{2}-\kappa ^{2}}\left[ \kappa a_{+}\left( t\right) +\varepsilon a_{-}\left(
t\right) ^{\ast }\right] ,\nonumber \\
 \sqrt{k}c_{-}\left( t\right) &\simeq &\frac{%
\sqrt{2\kappa }}{\varepsilon ^{2}-\kappa ^{2}}\left[ \kappa a_{-}\left(
t\right) +\varepsilon a_{+}\left( t\right) ^{\ast }\right] .
\label{eq:approx_eq}
\end{eqnarray}
The output is then 
\begin{eqnarray}
b\left( t\right)  &=&a_{+}\left( t\right) +\sqrt{2\kappa k}c_{+}\left(
t\right) \simeq a_{+}\left( t\right) +\frac{2\kappa }{\varepsilon
^{2}-\kappa ^{2}}\left[ \kappa a_{+}\left( t\right) +\varepsilon a_{-}\left(
t\right) ^{\ast }\right]  \nonumber \\
&\equiv &\sqrt{n+1}a_{+}\left( t\right) +\sqrt{n}a_{-}\left( t\right) ,
\end{eqnarray}
and likewise 
\begin{eqnarray}
\tilde{b}\left( t\right)  &=&a_{-}\left( t\right) +\sqrt{2\kappa k}%
c_{-}\left( t\right) \simeq a_{-}\left( t\right) +\frac{2\kappa }{%
\varepsilon ^{2}-\kappa ^{2}}\left[ \kappa a_{-}\left( t\right) +\varepsilon
a_{+}\left( t\right) ^{\ast }\right] \nonumber \\
&\equiv &\sqrt{n}a_{+}\left( t\right) +\sqrt{n+1}a_{-}\left( t\right) .
\end{eqnarray}

It is relatively straightforward to find a multi-dimensional version of this
for a general Bogoliubov transformation 
\begin{equation}
\left[ 
\begin{array}{c}
b\left( t\right)  \\ 
\tilde{b}\left( t\right) 
\end{array}
\right] =U\left[ 
\begin{array}{c}
a_{+}\left( t\right)  \\ 
a_{-}\left( t\right) 
\end{array}
\right] +V\left[ 
\begin{array}{c}
a_{+}\left( t\right)  \\ 
a_{-}\left( t\right) 
\end{array}
\right] .
\end{equation}

\subsection{Cascade Approximation}

The DPA which is described by 
\begin{equation}
\mathbf{G}_{DPA}\sim \left( \left[ 
\begin{array}{cc}
1 & 0 \\ 
0 & 1
\end{array}
\right] ,\left[ 
\begin{array}{c}
\sqrt{2\kappa k}c_{+} \\ 
\sqrt{2\kappa k}c_{-}
\end{array}
\right] ,H_{\mathrm{amp}}\right) 
\end{equation}
driven by the (vacuum) input pair $\left[ 
\begin{array}{c}
a_{+}\left( t\right)  \\ 
a_{-}\left( t\right) 
\end{array}
\right] $. It is then put in series with 
\begin{equation}
\mathbf{G}\sim \left( S,L,H\right) \boxplus \left( 1,0,0\right) =\left( 
\left[ 
\begin{array}{cc}
S & 0 \\ 
0 & 1
\end{array}
\right] ,\left[ 
\begin{array}{c}
L \\ 
0
\end{array}
\right] ,H\right) 
\end{equation}
which means that the output $a_{+}\left( t\right) $ is fed in as input to
the system $\mathbf{G}\sim \left( S,L,H\right) $ and $a_{-}\left( t\right) $
is left to go away unhindered, $\mathbf{G}_{\mathrm{trivial}}\sim \left(
1,0,0\right) $. According to the series product rule, we get DPA and system
in series is described by, 
\begin{equation}
\mathbf{G}\vartriangleleft \mathbf{G}_{DPA}\sim \bigg(\left[ 
\begin{array}{cc}
S & 0 \\ 
0 & 1
\end{array}
\right] ,\left[ 
\begin{array}{c}
L+S\sqrt{2\kappa k}c_{+} \\ 
\sqrt{2\kappa k}c_{-}
\end{array}
\right] ,H+H_{\mathrm{amp}}+\frac{\sqrt{\kappa k}}{\sqrt{2}i}\left( L^{\ast
}Sc_{+}-c_{+}^{\ast }S^{\ast }L\right) \bigg).
\end{equation}

From this we obtain the Heisenberg equations 
\begin{eqnarray}
\dot{X}_{t} &=&a_{+}\left( t\right) ^{\ast }\left( S^{\ast }XS-X\right)
_{t}a_{+}\left( t\right) +a_{+}\left( t\right) ^{\ast }S_{t}^{\ast }\left[
X,L\right] _{t}+\left[ L^{\ast },X\right] _{t}S_{t}a_{+}\left( t\right) \nonumber \\
&&+\frac{1}{2}\left[ L^{\ast },X\right] _{t}\left( L+S\sqrt{2\kappa k}%
c_{+}\right) _{t}+\frac{1}{2}\left( L+S\sqrt{2\kappa k}c_{+}\right)
_{t}^{\ast }\left[ X,L\right] _{t} \nonumber \\
&&-i\left[ X,H+\frac{\sqrt{2\kappa k}}{2i}\left( L^{\ast }Sc_{+}-c_{+}^{\ast
}S^{\ast }L\right) \right] _{t}.
\end{eqnarray}

We now make the approximation $\sqrt{k}c_{+}\left( t\right) \simeq \frac{%
\sqrt{2\kappa }}{\varepsilon ^{2}-\kappa ^{2}}\left[ \kappa a_{+}\left(
t\right) +\varepsilon a_{-}\left( t\right) ^{\ast }\right] $ which leads to

\begin{eqnarray}
\dot{X}_{t} &\simeq &a_{+}\left( t\right) ^{\ast }\left( S^{\ast
}XS-X\right) _{t}a_{+}\left( t\right) +a_{+}\left( t\right) ^{\ast
}S_{t}^{\ast }\left[ X,L\right] _{t}+\left[ L^{\ast },X\right]
_{t}S_{t}a_{+}\left( t\right) +\mathcal{L}\left( X\right) _{t} \nonumber \\
&&+\left\{ \left[ L^{\ast },X\right] _{t}S_{t}+\frac{1}{2}L_{t}^{\ast }\left[
S,X\right] _{t}\right\} \left[ \left( \sqrt{n+1}-1\right) a_{+}\left(
t\right) +\sqrt{n}a_{-}\left( t\right) ^{\ast }\right]  \nonumber \\
&&+\left[ \left( \sqrt{n+1}-1\right) a_{+}\left( t\right) ^{\ast }+\sqrt{n}%
a_{-}\left( t\right) \right] \left\{ S_{t}^{\ast }\left[ X,L\right] _{t}+%
\frac{1}{2}\left[ X,S^{\ast }\right] _{t}L_{t}\right\} .
\end{eqnarray}
Here we have $n=\left( \frac{2\varepsilon \kappa }{\varepsilon
^{2}-\kappa ^{2}}\right) ^{2}$, as before.

We now make a key assumption: \textbf{the scattering term} $S$ \textbf{%
corresponds to a static element}. In this case $S\equiv e^{i\theta }$ for
some real $\theta $. The limit Heisenberg equation therefore simplifies to
\begin{eqnarray}
\dot{X}_{t} &=&a_{+}\left( t\right) ^{\ast }S^{\ast }\left[ X,L\right] _{t}+%
\left[ L^{\ast },X\right] _{t}Sa_{+}\left( t\right) +\mathcal{L}\left(
X\right) _{t} \nonumber \\
&&+\left[ L^{\ast },X\right] _{t}S\left[ \left( \sqrt{n+1}-1\right)
a_{+}\left( t\right) +\sqrt{n}a_{-}\left( t\right) ^{\ast }\right] \nonumber \\
&&+\left[
\left( \sqrt{n+1}-1\right) a_{+}\left( t\right) ^{\ast }+\sqrt{n}a_{-}\left(
t\right) \right] S^{\ast }\left[ X,L\right] _{t} \nonumber \\
&=&\sqrt{n+1}a_{+}\left( t\right) ^{\ast }S^{\ast }\left[ X,L\right] _{t}+%
\sqrt{n+1}\left[ L^{\ast },X\right] _{t}Sa_{+}\left( t\right)  \nonumber \\
&&+\sqrt{n}\left[ L^{\ast },X\right] _{t}Sa_{-}\left( t\right) ^{\ast }+%
\sqrt{n}a_{-}\left( t\right) S^{\ast }\left[ X,L\right] _{t}+\mathcal{L}%
\left( X\right) _{t}.
\end{eqnarray}

We are not quite finished as the operators $a_{-}\left( t\right) $ and $%
a_{-}\left( t\right) $ are out of Wick order. However, this is easily
remedied. For instance, we easily deduce that 
\begin{eqnarray}
\left[ Y_{t},a_{-}\left( t\right) ^{\ast }\right]  &=&\int_{0}^{t}\left[ 
\dot{Y}_{s},a_{-}\left( t\right) ^{\ast }\right] ds \nonumber \\
&=&\int_{0}^{t}\left[ \sqrt{n}a_{-}\left( s\right) S^{\ast }\left[ Y,L\right]
_{s},a_{-}\left( t\right) ^{\ast }\right] ds \nonumber \\
&=&\frac{1}{2}\sqrt{n}S^{\ast }\left[ Y,L\right] _{t}
\end{eqnarray}
so that we arrive at 
\begin{equation}
\left[ L^{\ast },X\right] _{t}Sa_{-}\left( t\right) ^{\ast }=a_{-}\left(
t\right) ^{\ast }\left[ L^{\ast },X\right] _{t}S+\frac{1}{2}\sqrt{n}\left[ %
\left[ L^{\ast },X\right] ,L\right] _{t}.
\end{equation}

Similarly $\left[ a_{-}\left( t\right) ,Y_{t}\right] =\frac{1}{2}\sqrt{n}%
\left[ L^{\ast },Y\right] _{t}$ and therefore we get the Wick re-ordering
\begin{eqnarray}
a_{-}\left( t\right) S^{\ast }\left[ X,L\right] _{t}=S^{\ast }\left[ X,L%
\right] _{t}a_{-}\left( t\right) +\frac{1}{2}\sqrt{n}\left[ L^{\ast },\left[
X,L\right] \right] _{t}.
\end{eqnarray}

This leads to the form of the quantum white noise equation with both $a_{+}$
and $a_{-}$ Wick ordered as

\begin{eqnarray}
\dot{X}_{t} &=&\sqrt{n+1}a_{+}\left( t\right) ^{\ast }S^{\ast }\left[ X,L%
\right] _{t}+\sqrt{n+1}\left[ L^{\ast },X\right] _{t}Sa_{+}\left( t\right)  
\notag \\
&&+\sqrt{n}a_{-}\left( t\right) ^{\ast }\left[ L^{\ast },X\right] _{t}S+%
\sqrt{n}S^{\ast }\left[ X,L\right] _{t}a_{-}\left( t\right)   \notag \\
&&+\mathcal{L}\left( X\right) _{t}+\frac{1}{2}n\left[ \left[ L^{\ast },X%
\right] ,L\right] _{t}+\frac{1}{2}n\left[ L^{\ast },\left[ X,L\right] \right]
_{t}.  \label{eq:approx_Heis}
\end{eqnarray}

At this stage we recognize (\ref{eq:approx_Heis}) as the equivalent form of
the Heisenberg quantum stochastic differential equation for thermal noise.

We also remark that the output process determined by systems in series is $%
B^{\mathrm{out}}\left( t\right) =U_{t}^{\ast }A_{+}\left( t\right) U_{t}$,
and from the quantum stochastic calculus we have 
\begin{equation}
dB^{\mathrm{out}}\left( t\right) =dA_{+}\left( t\right) +\left( L+S\sqrt{%
2\kappa k}c_{+}\right) _{t}dt.
\end{equation}
Using (\ref{eq:approx_eq}) we approximate this as 
\begin{equation}
dB^{\mathrm{out}}\left( t\right) \simeq dA_{+}\left( t\right) +L_{t}dt+S%
\frac{2\kappa }{\varepsilon ^{2}-\kappa ^{2}}\left[ \kappa dA_{+}\left(
t\right) +\varepsilon dA_{-}\left( t\right) ^{\ast }\right] \equiv SdB^{%
\mathrm{in}}\left( t\right) +L_{t}dt,
\end{equation}
that is, the thermal input $B^{\mathrm{in}}\left( t\right) =\sqrt{n+1}%
A_{+}\left( t\right) +\sqrt{n}A_{-}\left( t\right) ^{\ast }$ produces the
output $B^{\mathrm{out}}\left( t\right) $ according to the usual rules one
would expect of a quantum Markov component with the parameters $\mathbf{G}%
\sim \left( S,L,H\right) $.

Therefore the description of a component with the parameters $\mathbf{G}\sim
\left( S,L,H\right) $, at least in the case where $S$ is a static
beam-splitter matrix, with Gaussian input processes may be considered as the
same component cascaded with a degenerate parametric amplifier with vacuum
inputs in the singular coupling limit of the DPA.


\section{The General Series Product}

\subsection{\label{Sec:NS_SP}Without Scattering}

Let us now consider the situation where a Gaussian input $B_{\mathrm{in}}=B_{%
\mathrm{in}}^{\left( \mathscr{A}\right) }$ is driving a system with SLH
parameters $\left( I,L_{\mathscr{A}},H_{\mathscr{A}}\right) $ and that its
output $B_{\mathrm{out}}^{\left( \mathscr{A}\right) }$ acts as input $B_{%
\mathrm{in}}^{\left( \mathscr{B}\right) }$ to a second system $\left( I,L_{%
\mathscr{B}},H_{\mathscr{B}}\right) $. (We do not assume that any of the
various SLH operators commute!)

\bigskip

\textbf{(Components in Series: The no scattering case)} The Heisenberg QSDE
for the systems $\left( I,L_{\mathscr{A}},H_{\mathscr{A}}\right) $ and $%
\left( I,L_{\mathscr{B}},H_{\mathscr{B}}\right) $ given by 
\begin{equation}
dj_{t}(X)=\sum_{\mathscr{S}=\mathscr{A},\mathscr{B}}\bigg\{dB_{\mathrm{in}%
}^{\left( \mathscr{S}\right) \ast }\circ j_{t}(\left[ X,L_{\mathscr{S}}%
\right] )+j_{t}(\left[ L_{\mathscr{S}}^{\ast },X\right] )\circ dB_{\mathrm{in%
}}^{\left( \mathscr{S}\right) }+j_{t}(\mathcal{L}_{\mathscr{S}}X)\circ dt%
\bigg\},
\end{equation}
where 
\begin{equation}
\mathcal{L}_{\mathscr{S}}X=\frac{1}{2}L_{\mathscr{S}}^{\ast }\left[ X,L_{%
\mathscr{S}}\right] +\frac{1}{2}\left[ L_{\mathscr{S}}^{\ast },X\right] L_{%
\mathscr{S}}-i\left[ X,H_{\mathscr{S}}\right] .
\end{equation}
and we have the constraints $B_{\mathrm{in}}^{\left( \mathscr{A}\right) }=B_{%
\mathrm{in}}$ and $dB_{\mathrm{in}}^{\left( \mathscr{B}\right) }=dB_{\mathrm{%
in}}^{\left( \mathscr{A}\right) }+j_{t}\left( L_{\mathscr{A}}\right) dt$,
consistent with $B_{\mathrm{in}}$ driving system $\mathscr{A}$ which in turn
drives $\mathscr{B}$, corresponds to the dynamics given by the intrinsic
series product (\ref{eq:series_prod}).

\begin{proof}
We have to show consistency of the quantum stochastic Heisenberg evolution $%
j_{t}(\cdot )$. To this end we take the open loop equations and impose the
constraint $dB_{\mathrm{in}}^{\left( \mathscr{B}\right) }=dB_{\mathrm{in}%
}^{\left( \mathscr{A}\right) }+j_{t}\left( L_{\mathscr{A}}\right) dt$ giving 
\begin{eqnarray}
dj_{t}\left( X\right)  &=&dB_{\mathrm{in}}^{\ast }\circ j_{t}(\left[ X,L_{%
\mathscr{A}}\right] )+j_{t}(\left[ L_{\mathscr{A}}^{\ast },X\right] )\circ
dB_{\mathrm{in}}  \notag \\
&+&\left( dB_{\mathrm{in}}+j_{t}(L_{\mathscr{A}})\right) ^{\ast }\circ j_{t}(%
\left[ X,L_{\mathscr{B}}\right] )  \notag \\
&+&j_{t}(\left[ L_{\mathscr{B}}^{\ast },X\right] )\circ \left( dB_{\mathrm{in%
}}+j_{t}(L_{\mathscr{A}})dt\right)   \notag \\
&+&j_{t}(\mathcal{L}_{\mathscr{A}}X)\circ dt+j_{t}(\mathcal{L}_{\mathscr{B}%
}X)\circ dt,
\end{eqnarray}
which we may rearrange as 
\begin{eqnarray}
dj_{t}(X) &=&dB_{\mathrm{in}}^{\ast }\circ j_{t}(\left[ X,L_{\mathscr{A}}+L_{%
\mathscr{B}}\right] )+j_{t}(\left[ L_{\mathscr{A}}^{\ast }+L_{\mathscr{B}%
}^{\ast },X\right] )\circ dB_{\mathrm{in}} \nonumber \\
&&+j_{t}\bigg(\mathcal{L}_{\mathscr{A}}X+\mathcal{L}_{\mathscr{B}}X+L_{%
\mathscr{A}}^{\ast }\left[ X,L_{\mathscr{B}}\right] +\left[ L_{\mathscr{B}%
}^{\ast },X\right] L_{\mathscr{A}}\bigg)\circ dt.
\end{eqnarray}
However, the $dt$ term can be recast using the identity 
\begin{eqnarray}
&&\mathcal{L}_{\mathscr{A}}X+\mathcal{L}_{\mathscr{B}}X+L_{\mathscr{A}%
}^{\ast }\left[ X,L_{\mathscr{B}}\right] +\left[ L_{\mathscr{B}}^{\ast },X%
\right] L_{\mathscr{A}} \nonumber \\
&=&\frac{1}{2}\left( L_{\mathscr{A}}+L_{\mathscr{B}}\right) ^{\ast }\left[
X,L_{\mathscr{A}}+L_{\mathscr{B}}\right] +\frac{1}{2}\left[ L_{\mathscr{A}%
}^{\ast }+L_{\mathscr{B}}^{\ast },X\right] \left( L_{\mathscr{A}}+L_{%
\mathscr{B}}\right) \nonumber  \\
&&-i\left[ X,H_{\mathscr{A}}+H_{\mathscr{B}}+\frac{1}{2i}\left( L_{%
\mathscr{B}}^{\ast }L_{\mathscr{A}}-L_{\mathscr{A}}^{\ast }L_{\mathscr{B}%
}\right) \right] .
\end{eqnarray}
The resulting Heisenberg dynamics is therefore the same as for the model $%
(I,L,H)$ with $L=L_{\mathscr{A}}+L_{\mathscr{B}}$, and $H=H_{\mathscr{A}}+H_{%
\mathscr{B}}+\mathrm{Im}\{L_{\mathscr{B}}^{\ast }L_{\mathscr{A}}\}$. This
is, of course, the form predicted by the series product in the Fock case (%
\ref{eq:series_prod}).
\end{proof}

\subsection{Including Scattering}

As mentioned above, it is not possible to construct a well defined
scattering processes $\Lambda _{jk}$ in the non-Fock theory. Nevertheless,
the effects of static beam-splitter scattering $S$ may be included in a
straightforward manner without directly considering unitary QSDE models
involving the scattering processes. A clue on how to proceed is given by
our earlier observation that if the scattering matrix $S$ entries commute with systems operators - physically, a static beam-splitter -
the scattering processes disappears.

\bigskip

In the Fock representation, we could always take the input field $A_{\mathrm{%
in}}$ and apply a unitary rotation $A=SA_{\mathrm{in}}$ before passing it
though as drive for component. As we have seen, this
will require a compensating rotation of the coupling operators, but no
change to the Lindbladian. There is also a rotation of the output, however,
anticipating this we make the following definition.

\begin{definition}{Definition}
Let $\mathbf{G}$ and $\mathbf{\tilde{G}}$ be SLH model parameters which, for
given input noise $A_{\mathrm{in}}=\tilde{A}_{\mathrm{in}}$ lead to output
noises $A_{\mathrm{out}}$ and $\tilde{A}_{\mathrm{out}}$ respectively. We say
that the models' input-output relations are \textbf{related by a static
beam-splitter matrix} $S$ if we have 
\begin{eqnarray}
A_{\mathrm{out}}=S\,\tilde{A}_{\mathrm{out}}.
\end{eqnarray}
\end{definition}

The following result shows that for the Fock representation, if the
scattering is just a static beam-splitter, then we can produce a related
model which avoids the use of the scattering processes.

\bigskip

\begin{theorem}{Theorem}
\label{Thm:S} Let $S$ be a static beam-splitter matrix and set $\mathbf{G}%
\sim \left( S,L,H\right) $ and $\mathbf{\tilde{G}}\sim \left( I,S^{\ast
}L,H\right) $. Then the model parameters $\mathbf{G}$ and $\mathbf{\tilde{G}}
$ generate the same Heisenberg dynamics. Moreover, their input-output
relations are related by the static beam-splitter matrix $S$.
\end{theorem}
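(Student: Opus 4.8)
The plan is to exploit the two defining features of a static beam-splitter: the entries $S_{jk}$ are scalars that commute with every system operator, and $S$ is unitary, $S_{lj}^{\ast}S_{lk}=\delta_{jk}$ and $S_{ik}S_{lk}^{\ast}=\delta_{il}$. First I would feed $\mathbf{G}\sim(S,L,H)$ into the Heisenberg--Langevin equation and evaluate the Evans--Hudson super-operators. The decisive observation is that the scattering super-operator collapses: $\mathcal{L}_{jk}X=S_{lj}^{\ast}XS_{lk}-\delta_{jk}X=(S_{lj}^{\ast}S_{lk})X-\delta_{jk}X=0$, because the scalars $S_{lj}$ pass through $X$ and unitarity contracts the sum. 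Hence the $d\Lambda_{jk}$ terms drop out entirely, and the flow generated by $\mathbf{G}$ is driven by the Fock fields $A_j,A_j^{\ast}$ and $dt$ alone --- the same structural form already enjoyed by $\mathbf{\tilde{G}}\sim(I,S^{\ast}L,H)$, whose scattering part is trivial from the outset.

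Next I would match the three surviving coefficients. Writing $\tilde{L}_k=(S^{\ast}L)_k=S_{lk}^{\ast}L_l$, I would check term by term that the $dA_j^{\ast}$, $dA_k$, and $dt$ coefficients of the two models coincide: for the creation coefficient, $\tilde{\mathcal{L}}_{j0}X=[X,\tilde{L}_j]=S_{lj}^{\ast}[X,L_l]=\mathcal{L}_{j0}^{\mathbf{G}}X$; for the annihilation coefficient, $\tilde{\mathcal{L}}_{0k}X=[\tilde{L}_k^{\ast},X]=S_{lk}[L_l^{\ast},X]=\mathcal{L}_{0k}^{\mathbf{G}}X$; and for the Lindbladian, the substitution $\tilde{L}_k=S_{lk}^{\ast}L_l$ produces contractions $\sum_k S_{ik}S_{lk}^{\ast}=\delta_{il}$ that collapse $\tilde{\mathcal{L}}X$ to $\frac{1}{2}L_i^{\ast}[X,L_i]+\frac{1}{2}[L_i^{\ast},X]L_i-i[X,H]=\mathcal{L}X$. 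Since the two Heisenberg--Langevin QSDEs then have identical coefficients and are driven by the same Fock increments, uniqueness of solutions forces $j_t^{\mathbf{G}}(X)=j_t^{\mathbf{\tilde{G}}}(X)$ for every system operator $X$ --- i.e. the models generate the same Heisenberg dynamics, even though the unitaries $U^{\mathbf{G}}$ and $U^{\mathbf{\tilde{G}}}$ are genuinely different.

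For the input-output claim I would use the relation $dA_{\mathrm{out},j}=j_t(S_{jk})\,dA_{\mathrm{in},k}+j_t(L_j)\,dt$. Because $S_{jk}$ is a scalar, $j_t(S_{jk})=S_{jk}$, so for $\mathbf{G}$ we get $dA_{\mathrm{out},j}=S_{jk}\,dA_{\mathrm{in},k}+j_t(L_j)\,dt$, while for $\mathbf{\tilde{G}}$ (trivial scattering) $d\tilde{A}_{\mathrm{out},k}=d\tilde{A}_{\mathrm{in},k}+j_t(\tilde{L}_k)\,dt$. Taking $A_{\mathrm{in}}=\tilde{A}_{\mathrm{in}}$ and forming $\sum_k S_{jk}\,d\tilde{A}_{\mathrm{out},k}$, the noise part gives $S_{jk}\,dA_{\mathrm{in},k}$ directly, and the drift collapses via $\sum_k S_{jk}\tilde{L}_k=\sum_k S_{jk}S_{lk}^{\ast}L_l=\delta_{jl}L_l=L_j$, so that $\sum_k S_{jk}\,d\tilde{A}_{\mathrm{out},k}=dA_{\mathrm{out},j}$. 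Integrating from the common initial value $A_{\mathrm{out}}(0)=\tilde{A}_{\mathrm{out}}(0)=0$ then yields $A_{\mathrm{out}}=S\tilde{A}_{\mathrm{out}}$, which is precisely the beam-splitter relation of the Definition.

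The routine part is the index bookkeeping; the step that actually carries the theorem is the vanishing of $\mathcal{L}_{jk}$, since this is what removes the scattering process that has no non-Fock counterpart and reduces $\mathbf{G}$ to a purely diffusive model comparable to $\mathbf{\tilde{G}}$. The one point requiring care is the phrase \emph{same Heisenberg dynamics}: I must argue at the level of the flow $j_t$ on system observables (via uniqueness of the Langevin QSDE) rather than at the level of the unitaries, which do not agree. Once that is granted, the common flow $j_t$ may be legitimately inserted into both input-output relations, making the final contraction $\sum_k S_{jk}\tilde{L}_k=L_j$ valid.
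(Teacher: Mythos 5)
Your proposal is correct and follows essentially the same route as the paper: show the scattering super-operator vanishes for a static beam-splitter, match the $dA^{\ast}$, $dA$ and $dt$ coefficients of the two Heisenberg--Langevin QSDEs using the scalar and unitary properties of $S$, and then verify $A_{\mathrm{out}}=S\tilde{A}_{\mathrm{out}}$ from the input-output relations via the contraction $\sum_k S_{jk}\tilde{L}_k=L_j$. Your explicit computation that $\mathcal{L}_{jk}X=0$ and your remark that ``same Heisenberg dynamics'' must be read at the level of the flow $j_t$ (by uniqueness of the QSDE solution) rather than of the unitaries are welcome clarifications of points the paper leaves implicit, but they do not change the argument.
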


\begin{proof}
The Heisenberg dynamics generated by $\mathbf{G}$ is (the scattering terms
vanish for a static beam-splitter) 
\begin{equation}
dj_{t}^{\mathbf{G}}(X)=\sum_{j}j_{t}(\mathcal{L}_{j0}^{\mathbf{G}%
}X)\,dA_{j}^{\ast }+\sum_{k}j_{t}(\mathcal{L}_{0k}^{\mathbf{G}%
}X)\,dA_{k}+j_{t}(\mathcal{L}^{\mathbf{G}}X)dt
\end{equation}
where 
\begin{equation}
\mathcal{L}_{j0}^{\mathbf{G}}X=S_{lj}^{\ast }\left[ X,L_{l}\right] ,\quad 
\mathcal{L}_{0k}^{\mathbf{G}}X=\left[ L_{l}^{\ast },X\right] S_{lk}
\end{equation}
and the Lindblad generator is $\mathcal{L}^{\mathbf{G}}X=\frac{1}{2}%
L_{k}^{\ast }\left[ X,L_{k}\right] +\frac{1}{2}\left[ L_{k}^{\ast },X\right]
L_{k}-i\left[ X,H\right] $. The Heisenberg dynamics for $\mathbf{\tilde{G}}$
similarly has no scattering terms in its QSDE, and we see that 
\begin{equation}
\mathcal{L}_{j0}^{\mathbf{G}}X=[X,S_{lj}^{\ast }L_{l}]\equiv \mathcal{L}%
_{j0}^{\mathbf{\tilde{G}}}X,\quad \mathcal{L}_{0k}^{\mathbf{G}%
}X=[L_{l}^{\ast }S_{lk},X]\equiv \mathcal{L}_{0k}^{\mathbf{\tilde{G}}}X.
\end{equation}
From the unitarity and scalar nature of $S$ we have that 
\begin{eqnarray}
\mathcal{L}^{\mathbf{\tilde{G}}}X &=&\frac{1}{2}L_{k}^{\ast }S_{kl}\left[
X,S_{jl}^{\ast }L_{j}\right] +\frac{1}{2}\left[ L_{k}^{\ast }S_{kl},X\right]
S_{jl}^{\ast }L_{j}-i\left[ X,H\right]  \nonumber\\
&=&\frac{1}{2}L_{k}^{\ast }\left[ X,L_{k}\right] +\frac{1}{2}\left[
L_{k}^{\ast },X\right] L_{k}-i\left[ X,H\right]  \nonumber \\
&\equiv &\mathcal{L}^{\mathbf{G}}X.
\end{eqnarray}
Therefore the QSDEs corresponding to the Heisenberg dynamics for $\mathbf{G}$
and $\mathbf{\tilde{G}}$ are identical.

The input-output relations for $\mathbf{G}$ are 
\begin{equation}
dA_{\mathrm{out},j}\left( t\right) =S_{jk}\,dA_{\mathrm{in},k}+j_{t}\left(
L_{j}\right) \,dt
\end{equation}
while for $\mathbf{\tilde{G}}$ we have 
\begin{equation}
dB_{\mathrm{out},j}\left( t\right) =dB_{\mathrm{in},j}+S_{jk}\,j_{t}\left(
L_{k}\right) \,dt.
\end{equation}
If we require the inputs to be the same ($A_{\mathrm{in}}=B_{\mathrm{in}}$)
then we have $A_{\mathrm{out}}=S\,B_{\mathrm{out}}$.
\end{proof}

\bigskip

Our strategy for introducing static beam-splitter scattering into the
situation where we have non-Fock noise input fields is to say that the
initial input $A_{\mathrm{in}}$ be replaced by the rotated input $SA_{%
\mathrm{in}}$, and exploit the fact that the Heisenberg dynamics no longer
involves the scattering processes $\Lambda _{jk}$ explicitly.

\begin{lemma}{Lemma}
\textbf{(The Universal Heisenberg QSDE Description)} The Heisenberg dynamics
for a general $\left( S,L,H\right) $ model with a static beam-splitter
matrix $S$ are given by the QSDE 
\begin{eqnarray}
dj_{t}(X)&=&dA_{\mathrm{in}}^{\ast }\circ S^{\ast }j_{t}(\left[ X,L\right]
)+j_{t}(\left[ L^{\ast },X\right] )S\circ dA_{\mathrm{in}}  +j_{t}(\mathcal{L}X)\circ dt
\end{eqnarray}
for all mean-zero Gaussian input fields $A_{\mathrm{in}}$.
\end{lemma}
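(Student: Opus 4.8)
The plan is to avoid re-deriving anything from the Itô table and instead bootstrap from the two results already in hand: Theorem \ref{Thm:S} and the Representation-Free Form theorem with its Stratonovich equation (\ref{eq:Strat_Heis}). The key observation is that, although the scattering process $\Lambda_{jk}$ cannot be built in the non-Fock theory, the Heisenberg flow we want never requires it. First I would invoke Theorem \ref{Thm:S}: since $S$ is a static beam-splitter matrix, the model $\mathbf{G}\sim(S,L,H)$ and the no-scattering model $\mathbf{\tilde{G}}\sim(I,S^{\ast}L,H)$ generate one and the same Heisenberg flow $j_t$ and share the same Lindblad generator $\mathcal{L}$. This reduces the problem to a model with $S=I$, which is exactly the regime in which the representation-free form has been established.

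Next I would apply the Representation-Free Form theorem to $\mathbf{\tilde{G}}$. Writing $\tilde{L}_k=(S^{\ast}L)_k=\sum_l S^{\ast}_{lk}L_l$ and using that its Lindbladian coincides with $\mathcal{L}$ by Theorem \ref{Thm:S}, equation (\ref{eq:Strat_Heis}) gives, for every mean-zero Gaussian input $A_{\mathrm{in}}$,
\begin{equation}
dj_t(X)=dA_{\mathrm{in},k}^{\ast}\circ j_t([X,\tilde{L}_k])+j_t([\tilde{L}_k^{\ast},X])\circ dA_{\mathrm{in},k}+j_t(\mathcal{L}X)\circ dt.
\end{equation}
The remaining work is algebraic and uses that $S$ is static, so its entries are scalars commuting with all system operators and hence pass freely through $j_t$, through the commutators, and through the linear Stratonovich products. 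Substituting $[X,\tilde{L}_k]=S^{\ast}_{lk}[X,L_l]$ and $\tilde{L}_k^{\ast}=\overline{S^{\ast}_{lk}}L_l^{\ast}=S_{lk}L_l^{\ast}$ (so $[\tilde{L}_k^{\ast},X]=S_{lk}[L_l^{\ast},X]$), I would factor the scalar entries out of the two noise terms and recognise the resulting double sums as the matrix contractions $S^{\ast}j_t([X,L])$ (contracted against $dA_{\mathrm{in}}^{\ast}$) and $j_t([L^{\ast},X])S$ (contracted against $dA_{\mathrm{in}}$). This reproduces precisely $dA_{\mathrm{in}}^{\ast}\circ S^{\ast}j_t([X,L])$ and $j_t([L^{\ast},X])S\circ dA_{\mathrm{in}}$, which is the asserted QSDE.

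The main obstacle I anticipate is conceptual rather than computational: justifying that Theorem \ref{Thm:S}, whose proof is carried out in the Fock representation, transfers to the non-Fock Gaussian setting. The resolution is that the equivalence it asserts is purely algebraic — an identity of the Evans–Hudson superoperators and of the Lindblad generator — while the representation-free equation (\ref{eq:Strat_Heis}) depends on the model only through $\mathcal{L}$ and the commutators $[X,\tilde{L}_k]$, $[\tilde{L}_k^{\ast},X]$. Since that equation is $(N,M)$-independent, the Stratonovich QSDE for $(I,S^{\ast}L,H)$ holds verbatim in every Gaussian representation, and the scalar (static) nature of $S$ is exactly the hypothesis that makes both the reduction and the pulling of entries through $j_t$ legitimate. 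A secondary point to treat carefully is the index and adjoint bookkeeping, in particular why $S^{\ast}$ appears on the creation side and $S$ on the annihilation side; this is forced by the conjugation $\tilde{L}_k^{\ast}=S_{lk}L_l^{\ast}$ noted above, and no analogous check via a direct Itô computation is needed.
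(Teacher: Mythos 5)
Your proposal is correct, and the algebra checks out: with $\tilde{L}_{k}=S_{lk}^{\ast}L_{l}$ one has $[X,\tilde{L}_{k}]=S_{lk}^{\ast}[X,L_{l}]$ and $[\tilde{L}_{k}^{\ast},X]=S_{lk}[L_{l}^{\ast},X]$, the Lindbladian is unchanged by unitarity of $S$, and the contractions reassemble into $dA_{\mathrm{in}}^{\ast}\circ S^{\ast}j_{t}([X,L])$ and $j_{t}([L^{\ast},X])S\circ dA_{\mathrm{in}}$ exactly as claimed. The route differs from the paper's in how it is grounded. The paper offers no formal proof: it simply remarks that the lemma is equation (\ref{eq:approx_Heis}) --- the Wick-ordered Heisenberg equation obtained concretely from the singular-coupling limit of the DPA cascade, for the single-channel thermal case --- rewritten in Wick--Stratonovich form, with the preceding ``rotate the input by $S$'' strategy supplying the meaning of the $(S,L,H)$ model in the non-Fock setting. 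You instead assemble the same two building blocks (Theorem \ref{Thm:S} and the It\^{o}--Stratonovich equivalence) into a self-contained deduction from the Representation-Free Form theorem, which has the advantage of covering all multichannel Gaussian states $(N,M)$ at once rather than leaning on the thermal DPA computation, at the cost of losing the physical signal-generator picture. Your flagged concern is the right one, and your resolution is essentially what the paper intends, though I would state it slightly more strongly: in the non-Fock setting the $(S,L,H)$ model is not independently defined (there is no scattering process $\Lambda_{jk}$), so the passage to $(I,S^{\ast}L,H)$, equivalently to $(I,L,H)$ driven by the rotated input $SA_{\mathrm{in}}$, is really the \emph{definition} of the model rather than an application of a Fock-space theorem; once that is accepted, your argument is complete.
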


\bigskip

This is of course just the equation (\ref{eq:approx_Heis}) written in the Wick-Stratonovich form so as to be
representation free!

\bigskip

Now let us try and repeat or analysis from Section \ref{Sec:NS_SP}. Let us
now consider the situation where a Gaussian input $A_{\mathrm{in}}=A_{%
\mathrm{in}}^{\left( 1\right) }$ is driving a system with SLH parameters $%
\left( S_{\mathscr{A}},L_{\mathscr{A}},H_{\mathscr{A}}\right) $ and that its
output $A_{\mathrm{out}}^{\left( 1\right) } $ acts as input for a second
system $\left( S_{\mathscr{B}},L_{\mathscr{B}},H_{\mathscr{B}}\right) $.

\begin{lemma}{Lemma}
\label{prop:fig} \textbf{(Components in series: With a static beam-splitter
scattering)} The Heisenberg QSDE for a pair of systems $\left( S_{\mathscr{A}%
},L_{\mathscr{A}},H_{\mathscr{A}}\right) $ and $\left( S_{\mathscr{B}},L_{%
\mathscr{B}},H_{\mathscr{B}}\right) $ in series is 
\begin{eqnarray}
dj_{t}(X)=\sum_{\mathscr{S}=\mathscr{A},\mathscr{B}}\bigg\{dA_{\mathrm{in}%
}^{\left( \mathscr{S}\right) \ast }\circ j_{t}(\left[ X,L_{\mathscr{S}}%
\right] )  
+j_{t}(\left[ L_{\mathscr{S}}^{\ast },X\right] )\circ dA_{\mathrm{in}%
}^{\left( \mathscr{S}\right) }+j_{t}(\mathcal{L}_{\mathscr{S}}X)\circ dt%
\bigg\} ,
\end{eqnarray}
where $A_{\mathrm{in}}^{\left( \mathscr{A}\right) }=S_{\mathscr{A}}A_{\mathrm{in}%
}$ and $A_{\mathrm{in}}^{\left( \mathscr{B}\right) }=S_{\mathscr{B}}A_{\mathrm{%
out}}^{\left( \mathscr{A}\right) }$ where $dA_{\mathrm{out}}^{\left( %
\mathscr{A}\right) }=S_{\mathscr{A}}dA_{\mathrm{in}}^{\left( \mathscr{A}%
\right) }+j_{t}\left( L_{\mathscr{A}}\right) dt$, and the Lindbladians $\mathcal{L}_{\mathscr{S}}$ are as before.
\end{lemma}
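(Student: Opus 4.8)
The plan is to reduce the problem to the representation-free Wick--Stratonovich form already in hand, so that the series computation becomes term-for-term the one carried out for the no-scattering case in Section~\ref{Sec:NS_SP}, now threaded through with the two static scattering matrices. First I would obtain each component's Heisenberg QSDE separately in representation-free form. For $\mathscr{A}$, the Universal Heisenberg QSDE Description (the preceding Lemma) gives the dynamics of $(S_{\mathscr{A}},L_{\mathscr{A}},H_{\mathscr{A}})$ with the scattering sandwiched as $S_{\mathscr{A}}^{\ast}$; introducing the rotated drive $A_{\mathrm{in}}^{(\mathscr{A})}=S_{\mathscr{A}}A_{\mathrm{in}}$ and substituting $dA_{\mathrm{in}}=S_{\mathscr{A}}^{\ast}\,dA_{\mathrm{in}}^{(\mathscr{A})}$, the unitarity $S_{\mathscr{A}}S_{\mathscr{A}}^{\ast}=I$ together with the scalar (static) nature of $S_{\mathscr{A}}$ lets it pass through the Stratonovich product and cancel, collapsing the creation term into $dA_{\mathrm{in}}^{(\mathscr{A})\ast}\circ j_{t}([X,L_{\mathscr{A}}])$ and likewise for the annihilation term. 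This is precisely the $\mathscr{S}=\mathscr{A}$ summand claimed; the identical manipulation for $\mathscr{B}$, driven by $A_{\mathrm{in}}^{(\mathscr{B})}=S_{\mathscr{B}}A_{\mathrm{out}}^{(\mathscr{A})}$, produces the $\mathscr{S}=\mathscr{B}$ summand. Here Theorem~\ref{Thm:S} is the justification that absorbing the scattering into a rotation of the input leaves the Lindbladian $\mathcal{L}_{\mathscr{S}}$ unchanged while rotating the output.

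Next I would impose the physical series constraints exactly as stated. The external field enters $\mathscr{A}$ as $A_{\mathrm{in}}^{(\mathscr{A})}=S_{\mathscr{A}}A_{\mathrm{in}}$; the output of $\mathscr{A}$ is $dA_{\mathrm{out}}^{(\mathscr{A})}=S_{\mathscr{A}}\,dA_{\mathrm{in}}^{(\mathscr{A})}+j_{t}(L_{\mathscr{A}})\,dt$; and this, rotated by $S_{\mathscr{B}}$, drives $\mathscr{B}$, so that $dA_{\mathrm{in}}^{(\mathscr{B})}=S_{\mathscr{B}}S_{\mathscr{A}}\,dA_{\mathrm{in}}^{(\mathscr{A})}+S_{\mathscr{B}}\,j_{t}(L_{\mathscr{A}})\,dt$. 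Substituting this into the $\mathscr{B}$ summand and re-expressing everything in the single external increment $dA_{\mathrm{in}}^{(\mathscr{A})}$, the field-linear contributions collect into a single coupling $L_{\mathscr{B}}+S_{\mathscr{B}}L_{\mathscr{A}}$ acting against the net scattering $S_{\mathscr{B}}S_{\mathscr{A}}$ --- exactly the coupling and scattering of the series product (\ref{eq:series_prod}).

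The substantive step is then the Stratonovich bookkeeping of the remaining $dt$ terms. Expanding the products $\circ$ by the mid-point rule, the half-weight contractions between the field increments and the feedback drift $S_{\mathscr{B}}\,j_{t}(L_{\mathscr{A}})\,dt$ generate finite $dt$ contributions; adding these to $j_{t}(\mathcal{L}_{\mathscr{A}}X+\mathcal{L}_{\mathscr{B}}X)$ and the surviving $L_{\mathscr{A}}^{\ast}[\,X,L_{\mathscr{B}}]$-type cross pieces, I would apply the algebraic identity used in the no-scattering proof (now carrying an $S_{\mathscr{B}}$ between the $\mathscr{B}$- and $\mathscr{A}$-operators) to recast the entire drift as the single Lindbladian $\mathcal{L}$ for the coupling $L_{\mathscr{B}}+S_{\mathscr{B}}L_{\mathscr{A}}$ and Hamiltonian $H_{\mathscr{A}}+H_{\mathscr{B}}+\mathrm{Im}\{L_{\mathscr{B}}^{\ast}S_{\mathscr{B}}L_{\mathscr{A}}\}$. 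This identifies the combined Heisenberg dynamics with the single model $\mathbf{G}_{\mathscr{B}}\vartriangleleft\mathbf{G}_{\mathscr{A}}$, as required.

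I expect the main obstacle to be precisely this reorganisation: one must check that the mid-point corrections involving $S_{\mathscr{A}}$ and $S_{\mathscr{B}}$ reproduce exactly the cross Hamiltonian $\mathrm{Im}\{L_{\mathscr{B}}^{\ast}S_{\mathscr{B}}L_{\mathscr{A}}\}$ and, crucially, that no state-dependent ($N$,$M$) term survives. The representation-free form is what makes this go through cleanly: because the Wick--Stratonovich equation is structurally identical to the Fock-vacuum Heisenberg QSDE with $S=I$, the whole calculation is the same term-for-term as the established vacuum series-product derivation, and the unitary, scalar nature of $S_{\mathscr{A}},S_{\mathscr{B}}$ ensures they commute past the Stratonovich corrections. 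This is exactly why the spurious $n\,\mathrm{Im}[L_{\mathscr{B}}^{\ast},L_{\mathscr{A}}]$ contribution of the naive double-Fock construction never appears.
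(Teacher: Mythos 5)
Your proposal is correct and follows essentially the same route as the paper: substitute the series constraints $A_{\mathrm{in}}^{(\mathscr{A})}=S_{\mathscr{A}}A_{\mathrm{in}}$, $dA_{\mathrm{in}}^{(\mathscr{B})}=S_{\mathscr{B}}S_{\mathscr{A}}dA_{\mathrm{in}}+S_{\mathscr{B}}j_{t}(L_{\mathscr{A}})dt$ into the Wick--Stratonovich open-loop equations, collect the field terms into the coupling $S_{\mathscr{A}}^{\ast}S_{\mathscr{B}}^{\ast}(S_{\mathscr{B}}L_{\mathscr{A}}+L_{\mathscr{B}})$, and recast the drift via the scattering-decorated version of the no-scattering algebraic identity, finally invoking Theorem~\ref{Thm:S} to identify the result with $(S_{\mathscr{B}}S_{\mathscr{A}},\,S_{\mathscr{B}}L_{\mathscr{A}}+L_{\mathscr{B}},\,H_{\mathscr{A}}+H_{\mathscr{B}}+\mathrm{Im}\{L_{\mathscr{B}}^{\ast}S_{\mathscr{B}}L_{\mathscr{A}}\})$. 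You correctly locate both the key identity and the reason no $(N,M)$-dependent term survives.
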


\begin{proof}
Substituting the processes into the QSDEs yields 
\begin{eqnarray}
dj_{t}(X) &=&\left( S_{\mathscr{A}}dA_{\mathrm{in}}\right) ^{\ast }\circ
j_{t}(\left[ X,L_{\mathscr{A}}\right] ) +j_{t}(\left[ L_{\mathscr{A}}^{\ast
},X\right] )\circ S_{\mathscr{A}}dA_{\mathrm{in}}  \nonumber\\
&&+\left( S_{\mathscr{B}}S_{\mathscr{A}}dA_{\mathrm{in}}+S_{\mathscr{B}}L_{%
\mathscr{A}}dt\right) ^{\ast }\circ j_{t}(\left[ X,L_{\mathscr{B}}\right] ) \nonumber
\\
&& +j_{t}(\left[ L_{\mathscr{B}}^{\ast },X\right] )\circ \left( S_{%
\mathscr{B}}S_{\mathscr{A}}dA_{\mathrm{in}}+S_{\mathscr{B}}L_{\mathscr{A}%
}dt\right) \nonumber \\
&&+j_{t}(\mathcal{L}_{\mathscr{A}}X)\circ dt+j_{t}(\mathcal{L}_{\mathscr{B}%
}X)\circ dt, \nonumber \\
&=&\left( dA_{\mathrm{in}}\right) ^{\ast }\circ j_{t}(\left[ X,S_{\mathscr{A}%
}^{\ast }L_{\mathscr{A}}+S_{\mathscr{A}}^{\ast }S_{\mathscr{B}}^{\ast }L_{%
\mathscr{B}}\right] ) +j_{t}(\left[ L_{\mathscr{A}}^{\ast }S_{\mathscr{A}%
}+L_{\mathscr{B}}^{\ast }S_{\mathscr{B}}S_{\mathscr{A}},X\right] )\circ dA_{%
\mathrm{in}} \nonumber \\
&&+j_{t}(\mathcal{L}_{\mathscr{A}}X+\mathcal{L}_{\mathscr{B}}X +L_{%
\mathscr{A}}^{\ast }S_{\mathscr{B}}^{\ast }\left[ X,L_{\mathscr{B}}\right] +%
\left[ L_{\mathscr{B}}^{\ast },X\right] S_{\mathscr{B}}L_{\mathscr{A}})\circ
dt.
\end{eqnarray}

A similar calculation to before shows that 
\begin{eqnarray}
&&\mathcal{L}_{\mathscr{A}}X+\mathcal{L}_{\mathscr{B}}X+L_{\mathscr{A}%
}^{\ast }S_{\mathscr{B}}^{\ast }\left[ X,L_{\mathscr{B}}\right] +\left[ L_{%
\mathscr{B}}^{\ast },X\right] S_{\mathscr{B}}L_{\mathscr{A}} \nonumber \\
&=&\frac{1}{2}\left( S_{\mathscr{B}}L_{\mathscr{A}}+L_{\mathscr{B}}\right)
^{\ast }\left[ X,S_{\mathscr{B}}L_{\mathscr{A}}+L_{\mathscr{B}}\right] +%
\frac{1}{2}\left[ L_{\mathscr{A}}^{\ast }S_{\mathscr{B}}^{\ast }+L_{%
\mathscr{B}}^{\ast },X\right] \left( S_{\mathscr{B}}L_{\mathscr{A}}+L_{%
\mathscr{B}}\right) \nonumber  \\
&&-[iX,H_{\mathscr{A}}+H_{\mathscr{B}}+\frac{1}{2i}\left( L_{\mathscr{B}%
}^{\ast }S_{\mathscr{B}}L_{\mathscr{A}}-L_{\mathscr{A}}^{\ast }S_{\mathscr{B}%
}^{\ast }L_{\mathscr{B}}\right) ].
\end{eqnarray}
The resulting Heisenberg dynamics is therefore same as for the model $%
\mathbf{\tilde{G}}\sim (I,\tilde{L},H)$ with coupling operators $\tilde{L}%
=S_{\mathscr{A}}^{\ast }L_{\mathscr{A}}+S_{\mathscr{A}}^{\ast }S_{\mathscr{B}%
}^{\ast }L_{\mathscr{B}}\equiv S_{\mathscr{A}}^{\ast }S_{\mathscr{B}}^{\ast
}\left( S_{\mathscr{B}}L_{\mathscr{A}}+L_{\mathscr{B}}\right) $, and
Hamiltonian $H=H_{\mathscr{A}}+H_{\mathscr{B}}+\mathrm{Im}\{L_{\mathscr{B}%
}^{\ast }S_{\mathscr{B}}L_{\mathscr{A}}\}$.

The output is then $B_{\mathrm{out}}$ where 
\begin{equation}
dB_{\mathrm{out}}\left( t\right) =dA_{\mathrm{in}}\left( t\right)
+j_{t}\left( S_{\mathscr{A}}^{\ast }L_{\mathscr{A}}+S_{\mathscr{A}}^{\ast
}S_{\mathscr{B}}^{\ast }L_{\mathscr{B}}\right) dt.
\end{equation}
The correct output for this should however be $A_{\mathrm{out}}=S_{%
\mathscr{B}}S_{\mathscr{A}}B_{\mathrm{out}}$ so that

\begin{eqnarray}
dA_{\mathrm{out}}\left( t\right) =S_{\mathscr{B}}S_{\mathscr{A}}dA_{\mathrm{%
in}}\left( t\right) +j_{t}\left( S_{\mathscr{B}}L_{\mathscr{A}}+L_{%
\mathscr{B}}\right) dt
\end{eqnarray}
and we have the desired matrix $S_{\mathscr{B}}S_{\mathscr{A}}$ multiply the
inputs corresponding to scattering first by matrix $S_{\mathscr{A}}$ and
then by $S_{\mathscr{B}}$. The model $\mathbf{G}$ obtained from postulate Ia
is then the one related to $\mathbf{\tilde{G}}$ by the static beam-splitter
matrix $S_{\mathscr{B}}S_{\mathscr{A}}$, that is (from Theorem \ref{Thm:S}
with $S=S_{\mathscr{B}}S_{\mathscr{A}}$ and $\tilde{L}=S^\ast L$%
\begin{eqnarray}
\mathbf{G} & \sim &  \left( S,L,H\right) =\left( S_{\mathscr{B}}S_{\mathscr{A}%
},S_{\mathscr{B}}S_{\mathscr{A}}\tilde{L},H\right)  \notag \\
&=&
\left( S_{\mathscr{B}}S_{\mathscr{A}},S_{\mathscr{B}}L_{\mathscr{A}}+L_{%
\mathscr{B}},H_{\mathscr{A}}+H_{\mathscr{B}}+\mathrm{Im}\{L_{\mathscr{B}%
}^{\ast }S_{\mathscr{B}}L_{\mathscr{A}}\}\right) ,  \notag
\end{eqnarray}
and again we have the same form as the series product in the Fock case (\ref
{eq:series_prod}).
\end{proof}


\section{Conclusions}

We have shown that there is a consistent theory for quantum input-output
models in series when the driving input processes are in general Gaussian
states with a flat power spectrum. This emerges fairly explicitly at the
level of the singular input processes $b_k (t)$ themselves, but to have a
working theory we need to make the connection to the Hudson-Parthasarathy
quantum stochastic calculus. This involves quantum stochastic differential
equations on the Fock spaces used to represent the noise (which are a
mathematical convenience and not physical objects) with the result that the
associated dynamical equations appear to depend on the choice of Gaussian
state of the noise. In reality this is a mathematical artifact and we show
that even here there is a way of expressing the quantum stochastic
differential equations (the Wick-Stratonovich form introduced in this paper)
which removes these terms. In effect, it is the Wick-Stratonovich form that
translates in the physically relevant dynamical equations written in terms
of the quantum input processes $b_k (t)$.

The connection rules are then shown to be genuinely independent of the
choice of state. We were also able to include the effects of a static
beam-splitter component. At first sight this would seem problematic as the
scattering terms $\Lambda _{jk}(t)$ are not well-defined for non-vacuum
states, however, it is possible to ignore them from the model: in fact we
need to work at the level of the Heisenberg flow and the input-output
relations, neither of which involve the scattering terms. The result is that
we may account for static scattering and we find that the series product of 
\cite{GJ-Series} again gives the correct rule. In this way we extend the
series product to deal with quantum feedback networks driven by general
Gaussian input processes.

We have restricted our analysis to Bose systems, however, there is an Araki-Woods
type double Fock space representation for Fermi fields with quasi-free states as well,
and is applicable to Fermi stochastic processes \cite{HP_Fermi}, \cite{BSW}.
The network rules for Fermi stochastic processes can be similarly derived
and one would naturally expect these to again be state-independent.

\end{document}